\documentclass[sigconf]{acmart}
\AtBeginDocument{%
  }

\setcopyright{acmlicensed}
\copyrightyear{2018}
\acmYear{2018}
\acmDOI{XXXXXXX.XXXXXXX}
\acmConference[Conference acronym 'XX]{Make sure to enter the correct
  conference title from your rights confirmation email}{June 03--05,
  2018}{Woodstock, NY}
\acmISBN{978-1-4503-XXXX-X/2018/06}

\usepackage{enumitem}
\usepackage{booktabs}
\usepackage[normalem]{ulem}
\usepackage{multirow}
\definecolor{myblue}{HTML}{268BD2}
\definecolor{mygreen}{HTML}{658354}
\definecolor{results}{RGB}{220, 230, 240}
\usepackage[table]{xcolor}
\usepackage{tcolorbox}
\usepackage{algorithm}
\usepackage{algorithmic}

\usepackage{color, xspace}

\newcommand{\name}{GloRank}

\begin{document}

\title{From Local Indices to Global Identifiers: Generative Reranking for Recommender Systems via Global Action Space}

\author{Pengyue Jia$^1$*, Xiaobei Wang$^2$*, Yingyi Zhang$^1$*, Shuchang Liu$^2$, Yupeng Hou$^3$, Hailan Yang$^2$}
\author{Xu Gao$^2$, Xiaopeng Li$^1$, Yejing Wang$^1$, Julian McAuley$^3$, Xiang Li$^2$, Lantao Hu$^2$}
\author{Yongqi Liu$^2$, Kaiqiao Zhan$^2$, Han Li$^2$, Kun Gai$^2$, Xiangyu Zhao$^1$}

\affiliation{
	\institution{$^1$City University of Hong Kong, $^2$Kuaishou Technology, $^3$University of California San Diego}
	\country{}
}

\email{jia.pengyue@my.cityu.edu.hk, liushuchang@kuaishou.com, xianzhao@cityu.edu.hk}

\thanks{$*$ Both authors contributed equally to this work.}

\renewcommand{\shortauthors}{Jia et al.}

\begin{abstract}
In modern recommender systems, list-wise reranking serves as a critical phase within the multi-stage pipeline, finalizing the exposed item sequence and directly impacting user satisfaction by modeling complex intra-list item dependencies.
Existing methods typically formulate this task as selecting indices from the local input list.
However, this approach suffers from a semantically inconsistent action space: the same output neuron (logits) represents different items across different samples, preventing the model from establishing a stable, intrinsic understanding of the items.
To address this, we propose \name~(\textbf{Glo}bal Action Space \textbf{Rank}er), \textit{a generative framework that shifts reranking from selecting local indices to generating global identifiers}.
Specifically, we represent items as sequences of discrete tokens and reformulate reranking as a token generation task.
This design effectively decouples the scoring mechanism from the variable input order, ensuring that items are evaluated against a consistent global standard.
We further enhance this with a two-stage optimization pipeline: a supervised pre-training phase to initialize the model with high-quality demonstrations, followed by a reinforcement learning-based post-training phase to directly maximize list-wise utility.
Extensive experiments on two public benchmarks and a large-scale industrial dataset, coupled with online A/B tests, demonstrate that \name~consistently outperforms state-of-the-art baselines and achieves superior robustness in cold-start scenarios.
\end{abstract}

\begin{CCSXML}
	<ccs2012>
	<concept>
	<concept_id>10002951.10003317.10003347.10003350</concept_id>
	<concept_desc>Information systems~Recommender systems</concept_desc>
	<concept_significance>500</concept_significance>
	</concept>
	</ccs2012>
\end{CCSXML}

\ccsdesc[500]{Information systems~Recommender systems}

\keywords{Recommender Systems, List-wise Reranking, Generative Recommendations}

\maketitle

\section{Introduction}

In modern recommender systems, list-wise reranking~\cite{pei2019prm,liu2022rerankingsurvey} is typically applied as the final step of a multi-stage recommendation pipeline~\cite{covington2016youtube,gallagher2019cascade,liu2025recflow}.
While the retrieval and intermediate ranking phases primarily aim to filter the candidate pool, the reranking stage is dedicated to optimizing the collective utility of the list by modeling intra-list item dependencies.
Therefore, the reranking stage can be formulated as a combination optimization problem~\cite{bello2018seq2slate,zhang2025goalrank}, which selects and orders $K$ items from a candidate set of size $N$, creating a permutation space of size $N!/(N-K)!$, where $N$ typically ranges from dozens to hundreds in industrial settings.
In such cases, this factorial complexity renders the search for an optimal item combination computationally intractable, particularly under strict latency constraints. 
Moreover, user feedback is observed only for the single displayed list, which makes supervision extremely sparse with respect to the underlying permutation space~\cite{liu2022rerankingsurvey}. 
These characteristics make it difficult to design reranking models that are both effective and stable during training.

To address these challenges, existing approaches first studied generator-only methods~\cite{bello2018seq2slate,pei2019prm,lin2024dcdr} that focus on directly producing a ranked list, typically by employing autoregressive models or generative decoding strategies.
These approaches are characterized by their end-to-end simplicity and direct generation capability.
More recently, based on the intuition that discriminating high-quality lists is easier than directly generating them, the generator-evaluator paradigm has gained prominence in industrial systems~\cite{shi2023pier,ren2024nar4rec,yang2025mgeclig}.
In this framework, the process is decoupled into two phases: a generator or multiple generators first propose a set of candidate list permutations, and a separate evaluator subsequently selects the optimal sequence based on a list-wise value function, thereby explicitly modeling inter-item interactions in the generated list.

\begin{figure}[t]
    \centering
    \includegraphics[width=\linewidth]{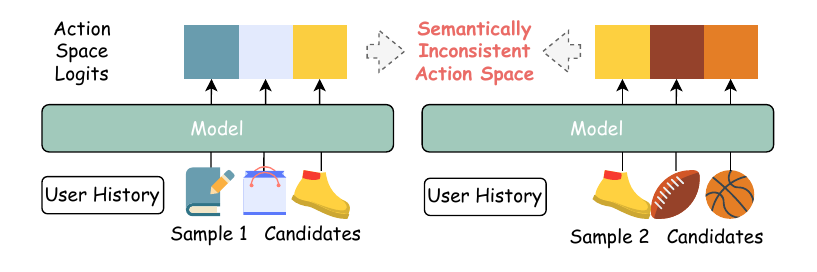}
    \caption{Semantic Inconsistency in Action Spaces.}
    \label{fig:intro}
\end{figure}

Despite these advances, existing architectures~\cite{zhang2025goalrank,yang2025mgeclig} typically operate within a \textbf{position-dependent action space}. 
In this formulation, the model learns to select items based on their relative indices (e.g., the $k$-th candidate) rather than their intrinsic identities.
This creates a semantically inconsistent mapping: the same output neuron (logit) corresponds to completely different items across different samples, depending on the randomized input order, as shown in Figure~\ref{fig:intro}. 
Consequently, the model struggles to capture the intrinsic utility of specific items (e.g., ``the iPhone is desirable''). Instead, it tends to overfit spurious patterns associated with the indices themselves (e.g., memorizing ``the $3^{\text{rd}}$ position is statistically preferable''), leading to poor generalization.

To overcome this, we argue that robust reranking requires a paradigm shift from \textit{selecting local indices} to \textit{generating global identifiers}. 
We need an architecture where the action of selecting an item is defined by its fixed semantic identity, independent of its temporary position in the candidate list.
However, implementing this global identifier generation paradigm is non-trivial: industrial corpora often scale to millions of items, rendering a flat global vocabulary computationally prohibitive for standard generative models. 
This raises a fundamental challenge: \textit{How can we efficiently anchor reranking in a consistent global space while bypassing the scalability bottlenecks of a massive vocabulary?}

To answer this question, we propose \name~(\textbf{Glo}bal Action Space \textbf{Rank}er), a novel generative framework redefines reranking by moving from local index selection to global identifier generation. 
Specifically, we employ Semantic IDs (SIDs)~\cite{fu2025forge} to represent items as sequences of discrete tokens derived from item attributes. 
This transformation maps the million-scale item corpus into a compact, fixed global vocabulary. 
Consequently, the reranking task is reformulated from \textit{selecting dynamic local indices} to \textit{generating static global identifiers}. 
This not only ensures that the model learns a stable evaluation standard that directly correlates with item semantics, but also is beneficial for improving cross-stage modeling consistency since SIDs are widely adopted in retrieval stages in many modern designs~\cite{rajput2023recommender,deng2025onerec}.
To ensure training stability and maximize performance, we introduce a two-stage optimization pipeline (inspired by the modern training framework of large recommendation models~\cite{zhou2025onerecv2,zhang2025goalrank}) adapted to the reranking task. 
A supervised pre-training phase first initializes the model by imitating high-quality reference lists, and a reinforcement learning-based post-training phase then refines the model to directly optimize list-wise rewards.
In addition, we further design a constrained decoding scheme, which maintains generation trees over the candidate sets, ensuring valid and non-duplicated output items.
We conduct both offline and online experiments to verify the effectiveness of \name~compared to state-of-the-art baselines.
In summary, the main contributions of this work are as follows: 
\begin{itemize}[leftmargin=*]
    \item We identify the semantically inconsistent action space as a fundamental bottleneck in existing reranking paradigms. We analyze how their reliance on dynamic relative indices causes the semantic meaning of output nodes to vary with input order, preventing the model from establishing a stable valuation standard for items.
    
    \item We propose \name~, a simple and effective generative reranking framework that advocates for a paradigm shift from local index selection to global identifier generation, by modeling the output space in a fixed global vocabulary, training with a two-stage paradigm, and inference with constrained decoding. 
    
    \item Extensive experiments on two public benchmarks and a large-scale industrial dataset, coupled with online A/B tests, demonstrate that \name~consistently outperforms state-of-the-art baselines in standard evaluation and achieves superior robustness in cold-start scenarios.
\end{itemize}

\section{Theoretical Analysis}
\label{sec:theory}

This section analyzes an optimization instability in list-wise reranking that originates from the semantic definition of the action space at the output layer.
We show that, under (candidate) position-dependent action spaces, the supervision signal received by a fixed output parameter row becomes stochastic even when the target item identity is fixed.
This induces an additional, irreducible variance term in the output-layer gradients, resulting in training instability.
In contrast, anchoring supervision in a globally consistent item space removes this mapping-induced variance, leaving only the variance caused by the encoder representation itself.

\subsection{List-wise Reranking Setup}

Let $\mathcal{V}$ denote the global item universe.
For each request with context $x$, the reranking stage receives a candidate set
$\mathcal{C}=\{v_1,\dots,v_N\}\subset\mathcal{V}$ and outputs an ordered list
$R=[r_1,\dots,r_K]$, where $K\le N$.
We consider an autoregressive list-wise generator with step-level distribution
$
p_\theta(r_t \mid x,\mathcal{C}, r_{<t}),
$
trained by minimizing the sum of step losses (e.g., cross-entropy) over a target list $R^*$:
\begin{equation}
\label{eq:listwise_ce}
\mathcal{L}(\theta)
=
\sum_{t=1}^{K}
\ell\!\left(\mathbf{z}^{(t)}_\theta,y^{(t)}\right),
\end{equation}
where $\mathbf{z}^{(t)}_\theta \in \mathbb{R}^{A}$ denotes the output logits of $p_\theta$ and $A$ denotes the size of the action space.
Typically, the output logits are calculated from a hidden state:
\begin{equation}
\label{eq:logits}
\mathbf{z}^{(t)}_\theta(\sigma) = \mathbf{W}\,\mathbf{h}^{(t)}_\sigma,
\end{equation}
where $\mathbf{h}^{(t)}_\sigma=h^{(t)}_\phi(x,\sigma(\mathcal{C}),r_{<t})\in\mathbb{R}^d$ is the hidden state before output,
$\mathbf{W}\in\mathbb{R}^{A\times d}$ is the output-layer parameter matrix,
and $\sigma$ denotes a permutation of the candidate list used to form the ordered input $\sigma(\mathcal{C})$.
We explicitly allow $\mathbf{h}^{(t)}_\sigma$ to depend on $\sigma$ (for example, due to position embeddings and attention over ordered inputs).

For cross-entropy, the gradient with respect to $\mathbf{W}$ is
\begin{equation}
\label{eq:grad_W}
\nabla_\mathbf{W}\,\ell\big(\mathbf{z}^{(t)}(\sigma), y^{(t)}(\sigma)\big)
=
\left(p^{(t)}(\sigma) - \mathbf{e}_{y^{(t)}(\sigma)}\right)
\left(\mathbf{h}^{(t)}_\sigma\right)^\top,
\end{equation}
where $p^{(t)}(\sigma)=\mathrm{softmax}(\mathbf{z}^{(t)}(\sigma))$ and $\mathbf{e}_y$ is one-hot vector of $y$.

To isolate the instability caused by the action-space semantics (which determines where supervision is applied in $\mathbf{W}$),
we analyze the \emph{label-dependent component} of the output-layer gradient:
\begin{equation}
\label{eq:label_component}
\mathbf{G}^{(t)}_{\mathrm{label}}(\sigma)
\triangleq
-\mathbf{e}_{y^{(t)}(\sigma)}\left(\mathbf{h}^{(t)}_\sigma\right)^\top.
\end{equation}
We measure the variance of a random matrix $\mathbf{X}$ by the squared Frobenius deviation:
\begin{equation}
\label{eq:matrix_var_def}
\mathrm{Var}_\sigma(\mathbf{X})
\triangleq
\mathbb{E}_\sigma\!\left[\left\|\mathbf{X}-\mathbb{E}_\sigma[\mathbf{X}]\right\|_F^2\right].
\end{equation}

\subsection{Semantic Inconsistency and Output-Layer Variance}

In many existing list-wise reranking architectures~\cite{pei2019prm,zhang2025goalrank}, the action space is defined over local candidate indices, with $A=N$.
Under this formulation, the semantic meaning of each output dimension depends on the input ordering: the same output row of $\mathbf{W}$ corresponds to different items across different permutations.

Consider a fixed generation step $t$ and a fixed semantic target item $r_t^*\in\mathcal{C}$.
Under a permutation $\sigma$, the supervised label is the position of this item in the permuted list:
\begin{equation}
\label{eq:local_label}
y^{(t)}_\mathrm{loc}(\sigma)
=
\operatorname{pos}_{\sigma}(r_t^*)
\in \{1,\dots,N\}.
\end{equation}
Equivalently, define the indicator variable for whether the target item appears at position $j$:
\begin{equation}
\label{eq:Ij_def}
I^{(t)}_j(\sigma)\triangleq \mathbb{I}\!\left(y^{(t)}_\mathrm{loc}(\sigma)=j\right).
\end{equation}
When $\sigma$ is sampled uniformly over permutations, $I^{(t)}_j$ is Bernoulli with
\begin{equation}
\label{eq:Ij_prob}
\mathbb{P}\!\left(I^{(t)}_j=1\right)=\frac{1}{N}.
\end{equation}

We now focus on the label-dependent update received by a \emph{fixed output parameter row}.
Let $\mathbf{w}_j^\top$ be the $j$-th row of $\mathbf{W}$.
From Eq.~\eqref{eq:label_component}, the label-dependent gradient contribution to $\mathbf{w}_j$ is
\begin{equation}
\label{eq:gj_label_local}
\mathbf{g}^{(t)}_{j,\mathrm{loc}}(\sigma)
\triangleq
-\mathbb{I}\!\left(y^{(t)}_\mathrm{loc}(\sigma)=j\right)\mathbf{h}^{(t)}_\sigma
=
- I^{(t)}_j(\sigma)\,\mathbf{h}^{(t)}_\sigma.
\end{equation}
This term has an ``on--off'' form: the row $\mathbf{w}_j$ receives supervision for the fixed target item $r_t^*$ only when the target happens to be at position $j$. Here, $\mathbf{g}_j(\sigma)$ denotes the gradient vector associated with the
$j$-th row of $\mathbf{W}$ (up to transpose).
For notational simplicity, we treat it as a vector in $\mathbb{R}^d$,
since all subsequent analysis depends only on its Euclidean norm.

\begin{proposition}[Irreducible mapping-induced variance under local indices]
\label{prop:mapping_variance_local}
Fix a step $t$ and a target item $r_t^*$.
Assume $\sigma$ is uniform over permutations so that $\mathbb{P}(I^{(t)}_j=1)=1/N$.
Let
$
\boldsymbol{\mu}^{(t)}_j \triangleq \mathbb{E}_\sigma\!\left[\mathbf{h}^{(t)}_\sigma \mid I^{(t)}_j=1\right].
$
If $\boldsymbol{\mu}^{(t)}_j\neq \mathbf{0}$, then the label-dependent gradient to row $\mathbf{w}_j$ has strictly positive variance:
\begin{equation}
\label{eq:local_lower_bound}
\mathrm{Var}_\sigma\!\left(\mathbf{g}^{(t)}_{j,\mathrm{loc}}\right)
\;\ge\;
\frac{1}{N}\!\left(1-\frac{1}{N}\right)\!\left\|\boldsymbol{\mu}^{(t)}_j\right\|_2^2
\;>\;0.
\end{equation}
\end{proposition}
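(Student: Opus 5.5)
The plan is to reduce the matrix-valued variance to a scalar Bernoulli variance via the law of total variance, conditioning on the indicator $I^{(t)}_j$. Write $\mathbf{g}\triangleq\mathbf{g}^{(t)}_{j,\mathrm{loc}}=-I\,\mathbf{h}$ with $I=I^{(t)}_j$ and $\mathbf{h}=\mathbf{h}^{(t)}_\sigma$. Since the variance in Eq.~\eqref{eq:matrix_var_def} is the trace of the covariance, the vector form of the law of total variance holds:
\begin{equation*}
\mathrm{Var}_\sigma(\mathbf{g})=\mathbb{E}\big[\mathrm{Var}(\mathbf{g}\mid I)\big]+\mathrm{Var}\big(\mathbb{E}[\mathbf{g}\mid I]\big).
\end{equation*}
The first term is nonnegative, so we may drop it to obtain a lower bound. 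This is where the word ``irreducible'' enters: the remaining term persists even if $\mathbf{h}$ were deterministic given $I$.

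Next, compute the conditional mean. On $\{I=0\}$ we have $\mathbf{g}=\mathbf{0}$, and on $\{I=1\}$ we have $\mathbb{E}[\mathbf{g}\mid I=1]=-\boldsymbol{\mu}^{(t)}_j$. Therefore $\mathbb{E}[\mathbf{g}\mid I]=-I\,\boldsymbol{\mu}^{(t)}_j$, which is a deterministic vector scaled by a Bernoulli variable. Its variance is
\begin{equation*}
\mathrm{Var}(I)\,\|\boldsymbol{\mu}^{(t)}_j\|_2^2=\tfrac{1}{N}\big(1-\tfrac{1}{N}\big)\|\boldsymbol{\mu}^{(t)}_j\|_2^2,
\end{equation*}
using Eq.~\eqref{eq:Ij_prob}. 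Strict positivity then follows from $\boldsymbol{\mu}^{(t)}_j\neq\mathbf{0}$ together with $N\ge 2$. We need $N\ge 2$ explicitly, since for $N=1$ the factor $1-\tfrac1N$ vanishes, so I would state this as implicit in the setup.

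As an alternative that avoids citing the law of total variance, one can expand directly. Write $\mathrm{Var}(\mathbf{g})=\mathbb{E}\|\mathbf{g}\|^2-\|\mathbb{E}\mathbf{g}\|^2$, with $\mathbb{E}\mathbf{g}=-\tfrac1N\boldsymbol{\mu}^{(t)}_j$ and $\mathbb{E}\|\mathbf{g}\|^2=\tfrac1N\mathbb{E}[\|\mathbf{h}\|^2\mid I=1]\ge\tfrac1N\|\boldsymbol{\mu}^{(t)}_j\|^2$ by Jensen's inequality. Subtracting gives exactly the bound. This is probably the cleaner write-up.

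The main obstacle is minor. We must make sure that the conditional expectation is well defined, which holds because $\mathbb{P}(I=1)=1/N>0$ and $\mathbf{h}$ has finite second moment. We should also note that the Frobenius variance of the row-gradient coincides with the Euclidean variance of the vector $\mathbf{g}$, as the text after Eq.~\eqref{eq:gj_label_local} stipulates.
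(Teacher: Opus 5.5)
Your main argument is correct and is essentially the paper's proof: apply the law of total variance conditioning on $I^{(t)}_j$, drop the nonnegative within-group term (which the paper writes as $\frac{1}{N}\mathrm{Var}(\mathbf{h}^{(t)}_\sigma \mid I^{(t)}_j=1)$), and evaluate the between-group term as the Bernoulli variance $\frac{1}{N}(1-\frac{1}{N})\|\boldsymbol{\mu}^{(t)}_j\|_2^2$. Your Jensen-based direct expansion is an equally valid rewriting of the same computation, and your remark that strict positivity requires $N\ge 2$ is correct and is left implicit in the paper.
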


\begin{proof}
We apply the law of total variance conditioning on $I^{(t)}_j$:
\begin{equation}
\begin{aligned}
\label{eq:total_var_local}
\mathrm{Var}_\sigma\!\left(\mathbf{g}^{(t)}_{j,\mathrm{loc}}\right)
& =
\mathbb{E}_{I_j^{(t)}}\!\left[\mathrm{Var}\!\left(\mathbf{g}^{(t)}_{j,\mathrm{loc}} \mid I^{(t)}_j\right)\right]
+
\mathrm{Var}_{I_j^{(t)}}\!\left(\mathbb{E}\!\left[\mathbf{g}^{(t)}_{j,\mathrm{loc}} \mid I^{(t)}_j\right]\right)\\
& =
\frac{1}{N}\!\left[\mathrm{Var}\!\left(\mathbf{h}^{(t)}_\sigma \mid I^{(t)}_j=1\right)\right]
+
\mathrm{Var}_{I_j^{(t)}}\!\left(\mathbb{E}\!\left[\mathbf{g}^{(t)}_{j,\mathrm{loc}} \mid I^{(t)}_j\right]\right)\\
&\geq \mathrm{Var}_{I_j^{(t)}}\!\left(\mathbb{E}\!\left[\mathbf{g}^{(t)}_{j,\mathrm{loc}} \mid I^{(t)}_j\right]\right)
\end{aligned}
\end{equation}
From Eq.~\eqref{eq:gj_label_local}, we have
\[
\mathbb{E}\!\left[\mathbf{g}^{(t)}_{j,\mathrm{loc}} \mid I^{(t)}_j=0\right]=\mathbf{0},
\qquad
\mathbb{E}\!\left[\mathbf{g}^{(t)}_{j,\mathrm{loc}} \mid I^{(t)}_j=1\right]=-\boldsymbol{\mu}^{(t)}_j.
\]
Therefore, $\mathbb{E}\!\left[\mathbf{g}^{(t)}_{j,\mathrm{loc}} \mid I^{(t)}_j\right]$ is a Bernoulli-driven random vector that equals $-\boldsymbol{\mu}^{(t)}_j$ with probability $p=1/N$ and equals $\mathbf{0}$ with probability $1-p$.
This mapping-induced variance equals
\begin{equation}
\label{eq:mapping_induced_variance}
\mathrm{Var}_{I_j^{(t)}}\!\left(\mathbb{E}\!\left[\mathbf{g}^{(t)}_{j,\mathrm{loc}} \mid I^{(t)}_j\right]\right)
=
\frac{1}{N}\left(1-\frac{1}{N}\right)\left\|\boldsymbol{\mu}^{(t)}_j\right\|_2^2>0.
\end{equation}
Combining with Eq.~\eqref{eq:total_var_local} yields Eq.~\eqref{eq:local_lower_bound}.
\end{proof}

\paragraph{Interpretation.}
Proposition~\ref{prop:mapping_variance_local} states that, regardless of the target item index being fixed or not, a fixed output row $\mathbf{w}_j$ receives an unstable supervision signal because the mapping from items to output rows depends on $\sigma$.
Additionally, this mapping-induced variance does not disappear even when $\mathbf{h}^{(t)}_\sigma$ is invariant to $\sigma$, and it remains strictly positive as long as the target item can appear at multiple positions under the permutation distribution.

\subsection{Stability Under Global Item Spaces}

We now consider an alternative formulation in which the action space is defined over a fixed global item space.
Each item $v\in\mathcal{V}$ is assigned a unique identifier $\mathrm{ID}(v)$, and the supervised label becomes invariant to permutations of $\mathcal{C}$:
\begin{equation}
\label{eq:global_label}
y^{(t)}_{\mathrm{glo}}=\mathrm{ID}(r_t^*).
\end{equation}
Since each row of $\mathbf{W}$ now corresponds to a global label, the label-dependent gradient contribution to the target row is stable with
\begin{equation}
\label{eq:gj_label_global}
\mathbf{g}^{(t)}_{\mathrm{glo}}(\sigma)
\triangleq
-\mathbf{h}^{(t)}_\sigma,
\end{equation}
which is independent of how the candidate list is permuted.

\begin{corollary}[Zero mapping-induced variance under global labels]
\label{cor:global_zero_mapping}
Fix a step $t$ and a target item $r_t^*$.
Under global labels in Eq.~\eqref{eq:global_label}, the mapping-induced variance is zero, and only the encoding variance remains for the corresponding $\sigma$. In particular,
\begin{equation}
\label{eq:global_var}
\mathrm{Var}_\sigma\!\left(\mathbf{g}^{(t)}_{\mathrm{glo}}\right)
=
\mathrm{Var}_\sigma\!\left(\mathbf{h}^{(t)}_\sigma\right).
\end{equation}
\end{corollary}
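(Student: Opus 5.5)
The plan is to mirror the proof of Proposition~\ref{prop:mapping_variance_local}, replacing the local indicator $I^{(t)}_j$ by the global-label analogue. The role of $I^{(t)}_j$ is played by the indicator $\mathbb{I}\big(y^{(t)}_{\mathrm{glo}}=\mathrm{ID}(r_t^*)\big)$, which we need to understand first.

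\textbf{Step 1: the label is deterministic.} By Eq.~\eqref{eq:global_label}, $y^{(t)}_{\mathrm{glo}}=\mathrm{ID}(r_t^*)$ does not involve $\sigma$. Since $t$ and $r_t^*$ are fixed, the indicator above equals $1$ almost surely. Plugging this into the row-wise form of Eq.~\eqref{eq:label_component} gives Eq.~\eqref{eq:gj_label_global}, namely $\mathbf{g}^{(t)}_{\mathrm{glo}}(\sigma)=-\mathbf{h}^{(t)}_\sigma$. All other rows receive exactly zero label-dependent gradient for every $\sigma$, so they contribute nothing to the variance.

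\textbf{Step 2: decompose with the law of total variance.} Apply the same decomposition as in Eq.~\eqref{eq:total_var_local}, now conditioning on the degenerate indicator. The mapping-induced term is $\mathrm{Var}\big(\mathbb{E}[\mathbf{g}^{(t)}_{\mathrm{glo}}\mid \mathbb{I}]\big)$. Because $\mathbb{I}\equiv 1$, this conditional expectation is the constant vector $-\mathbb{E}_\sigma[\mathbf{h}^{(t)}_\sigma]$, so its variance is $0$. Equivalently, this is the Bernoulli formula of Eq.~\eqref{eq:mapping_induced_variance} with $p=1$, where $p(1-p)=0$. The remaining term is $\mathbb{E}\big[\mathrm{Var}(\mathbf{g}^{(t)}_{\mathrm{glo}}\mid\mathbb{I})\big]=\mathrm{Var}_\sigma(-\mathbf{h}^{(t)}_\sigma)$.

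\textbf{Step 3: remove the sign.} Using the definition in Eq.~\eqref{eq:matrix_var_def}, $\|-\mathbf{X}+\mathbb{E}[\mathbf{X}]\|^2=\|\mathbf{X}-\mathbb{E}[\mathbf{X}]\|^2$, so $\mathrm{Var}_\sigma(-\mathbf{h}^{(t)}_\sigma)=\mathrm{Var}_\sigma(\mathbf{h}^{(t)}_\sigma)$. This yields Eq.~\eqref{eq:global_var}.

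\textbf{Main obstacle.} The only delicate point is interpretive rather than computational: making precise what "mapping-induced variance" means when the conditioning variable is degenerate. I would define it as the between-group term of the total-variance decomposition, exactly as in the proof of Proposition~\ref{prop:mapping_variance_local}, and note that a constant conditioning variable makes that term vanish identically. The full Frobenius variance of $\mathbf{G}^{(t)}_{\mathrm{label}}$ then equals $\mathrm{Var}_\sigma(\mathbf{h}^{(t)}_\sigma)$, since only one fixed row is nonzero.
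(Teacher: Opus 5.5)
Your proposal is correct and matches the paper's reasoning. The paper gives no separate proof: it reads the corollary directly off Eq.~\eqref{eq:gj_label_global}, noting that the label (and hence the supervised row) does not depend on $\sigma$, so $\mathbf{g}^{(t)}_{\mathrm{glo}}=-\mathbf{h}^{(t)}_\sigma$ and its variance equals that of $\mathbf{h}^{(t)}_\sigma$. Your degenerate-indicator total-variance decomposition, the $p=1$ case of Eq.~\eqref{eq:mapping_induced_variance}, is a tidy formalization of the paper's informal claim that the mapping-induced term vanishes.
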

Ideally, the encoder considers the candidates as an unordered set and $\mathbf{h}^{(t)}$ becomes agnostic to the permutation $\sigma$, then Eq.\eqref{eq:global_var} reduces to zero.
\paragraph{Discussion.}
Comparing Eq.~\eqref{eq:local_lower_bound} and Eq.~\eqref{eq:global_var}, the local position-dependent formulation incurs an additional variance term that comes from the stochastic assignment between items and output rows.
Anchoring supervision in a globally consistent item space removes this mapping-induced variance and leaves only the variance that is intrinsic to the encoder representation $\mathbf{h}^{(t)}_\sigma$.
In the next section, we describe how \name~realizes globally anchored supervision efficiently at scale while still selecting from the local candidate set.

\section{Methodology}

\begin{figure*}[ht]
    \centering
    \includegraphics[width=\linewidth]{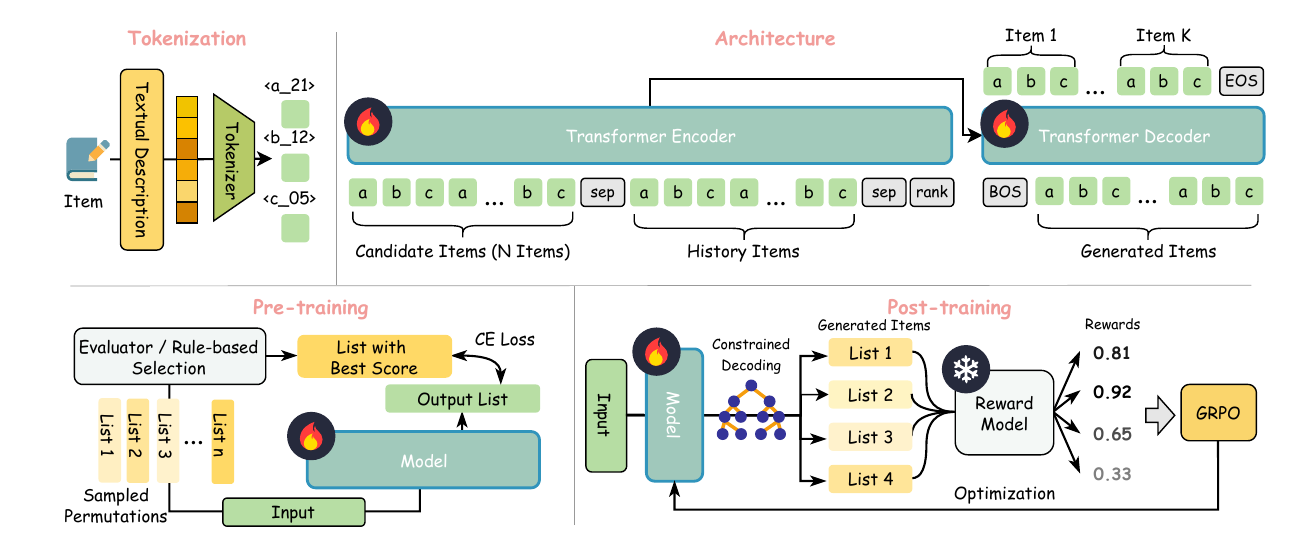}
    \caption{Overview of \name.}
    \label{fig:overview}
\end{figure*}

In this section, we present \name, a novel generative framework designed to resolve the semantic inconsistency of position-dependent action spaces by shifting the reranking paradigm from local index selection to global identifier generation.
Figure~\ref{fig:overview} illustrates the overall framework.
Specifically, we first introduce the tokenization mechanism in Section~\ref{sec:tokenization}, which efficiently constructs the stable global action space by transforming items into Semantic IDs (SIDs). 
Next, in Section~\ref{sec:arc}, we elaborate on the backbone architecture, which reformulates list-wise reranking as a sequence generation task to capture complex inter-item dependencies. 
We then present our decoupled optimization pipeline in Section~\ref{sec:opt}, comprising a supervised pre-training phase (Section~\ref{sec:pre-train}) that initializes the policy with high-quality demonstrations, and a reward-driven post-training phase (Section~\ref{sec:post-train}) that directly maximizes list-wise utility. 
Finally, in Section~\ref{sec:infer}, we detail the prefix-tree constrained decoding algorithm, which effectively bridges the gap between the global action space and local candidate sets, ensuring that generated outputs strictly adhere to candidate constraints.

\subsection{Tokenization} \label{sec:tokenization}

To materialize the concept of a global item space within a reranking framework, we require an item representation scheme that is both globally consistent and computationally efficient. 
Traditional atomic IDs, which assign a unique ID to every item, suffer from high dimensionality, lacking of semantic continuity, and insufficient representation learning for cold-start IDs~\cite{rajput2023recommender}.
These limitations make atomic item IDs unsuitable for modeling the large-scale global vocabulary required by our approach.
To overcome this, we adopt RQ-based Semantic IDs~\cite{zhou2025onerecv2,cuturi2013sinkhorn} to map items into a compact, hierarchical discrete space. 
As illustrated in the tokenization module of Figure~\ref{fig:overview}, this process consists of two key steps: semantic serialization and residual quantization.

\subsubsection{\textbf{Semantic Serialization.}}
To capture the semantics of an item $v \in \mathcal{V}$, we aggregate its textual attributes---such as title and description---into a unified text sequence $T_v$ using a predefined prompt template. This sequence is then fed into a pre-trained text encoder (e.g., Sentence-T5~\cite{ni2022sentence}, Qwen3-Embedding~\cite{qwen3embedding}) to obtain a dense semantic vector $\mathbf{h}_v = \text{Encoder}(T_v)$,
where $\mathbf{h}_v \in \mathbb{R}^D$ encapsulates the global semantic information of the item and $D$ represents the dimension of embeddings.

\subsubsection{\textbf{Residual Quantization.}} 
To transform this continuous vector into discrete tokens suitable for generative modeling, we employ the Residual Quantization (RQ) technique (specifically RQ-VAE~\cite{rajput2023recommender} or RQ-Kmeans~\cite{deng2025onerec,zhou2025onerec}). 
This method approximates the dense vector $\mathbf{h}_v$ through a multi-stage quantization process. 
At each stage $m$ (where $1 \le m \le M$, $M$ is the number of codebooks), the algorithm selects a code from a learned codebook $\mathcal{C}_m$ that minimizes the residual error from the previous stage. 
Formally, let $\mathcal{Q}(\cdot)$ denote the quantization function:
\begin{equation}
    s_v = [c_1, c_2, \dots, c_M] = \mathcal{Q}(\mathbf{h}_v; \Theta_{\mathcal{Q}}),
\end{equation}
where $s_v$ is the corresponding SID sequence that consists of $M$ discrete tokens, and $\Theta_{\mathcal{Q}}$ represents the codebook parameters. 
Each token $c_m$ corresponds to a semantic cluster at the $m$-th level of granularity. 
The resulting tokenizer ensures that the item $v$ is uniquely identified by the sequence $s_v$, allowing us to reformulate item selection as a sequence generation task over a fixed, compact global vocabulary $\mathcal{V}_\text{token} = \bigcup_{m=1}^M \mathcal{C}_m$.

\subsection{Architecture} \label{sec:arc}
To effectively capture the complex dependencies between items as well as patterns in the generated lists, \name~employs a transformer-based encoder-decoder architecture as its backbone to encode the entire input as a sequence and capture inter-item dependencies during generation, following previous approaches~\cite{rajput2023recommender,ju2025generative}.

\subsubsection{\textbf{Input Construction.}} 
The input to the model is a serialized sequence of SIDs constructed using a specific prompt template. Let $S(\cdot)$ denote the serialization function that maps a set of items to a concatenated SID sequence. Given the candidate set $\mathcal{C}$ and the user's interaction history $\mathcal{H}$, the input sequence $X_{\text{input}}$ is formulated as ``$S(\mathcal{C}) \texttt{<sep>} S(\mathcal{H}) \texttt{<sep>} \texttt{<rank>}$''
\noindent where \texttt{<sep>} is a separator token distinguishing different context segments, and \texttt{<rank>} is a prompt token that signals the model to generate reranked lists. 

\subsubsection{\textbf{Autoregressive Generation in Global Item Space.}}
The encoder first processes $X_{\text{input}}$ into a sequence of continuous hidden states. Subsequently, the decoder generates the output sequence autoregressively. A critical distinction of \name~is that the generation occurs over the fixed global vocabulary $\mathcal{V}_\text{token}$, rather than the selection of input indices in $\mathcal{C}$.

At each time step $t$, the decoder predicts the probability distribution of the next token $y_t$ conditioned on the input $X_{\text{input}}$ and previously generated tokens $y_{<t}$:
\begin{equation}
    P(y_t | y_{<t}, X_{\text{input}}) = \text{Softmax}(W_o \cdot h_t),
\end{equation}
where $h_t$ is the decoder's hidden state at step $t$, and $W_o$ projects the state to the vocabulary size. The model continues to generate tokens until the maximum list length is reached. 
As we will describe in Section~\ref{sec:infer}, the resulting sequence of tokens is then reshaped into item-level SIDs to form the final ordered list $R$ through a constrained decoding method during inference.

\subsection{\textbf{Optimization}} \label{sec:opt}

As illustrated in Figure~\ref{fig:overview}, our optimization pipeline is decoupled into two phases: (1) a supervised pre-training phase, which teaches the model to generate high-quality demonstrations; and (2) a reinforcement-based post-training phase, which further maximizes the list-wise utility through direct alignment with reward signals.

\subsubsection{\textbf{Pre-training}} \label{sec:pre-train}

The primary objective of the pre-training phase is to help the model understand item semantics and basic reranking principles. 
A straightforward approach would be to train the model to reconstruct the historical lists exposed to users in the logs~\cite{liu2022rerankingsurvey}. 
However, this approach heavily relies on the quality of the exposed list in the logged data, which may not be the optimal permutation.
This means that reranking models, different from conventional retrieval models, require counterfactual knowledge about unseen lists and learn to distinguish between them.

To mitigate this issue and provide high-quality supervision, we employ a strategy based on offline evaluation. 
Specifically, for each training instance consisting of user history $\mathcal{H}$ and a candidate set $\mathcal{C}$, 
we first generate a set of candidate lists $\mathcal{P} = \{R_1, R_2, \dots, R_L\}$ by sampling from the permutation space of $\mathcal{C}$. 
We then employ a proxy scoring function $E$—which can be a list-wise evaluator or heuristic rules—to assign a score to each permutation. The permutation with the highest score is selected as the target (i.e., $Y^* = \operatorname*{arg\,max}_{R \in \mathcal{P}} E(R, \mathcal{H})$).
With the constructed target sequence $Y^*=[y^*_1, y^*_2, \dots, y^*_{|Y^*|}]$ as a sequence of discrete tokens, 
we optimize \name~using the negative log-likelihood minimization objective in standard Next Token Prediction (NTP) tasks:
\begin{equation}
    \mathcal{L}_{\text{pre}} = - \sum_{t=1}^{|Y^*|} \log P(y^*_t | y^*_{<t}, X_{\text{input}}; \theta).
\end{equation}

\subsubsection{\textbf{Post-training}} \label{sec:post-train}

While the pre-training phase endows the model with basic semantic understanding and reranking capabilities, it remains bounded by the quality of target labels in the offline data. 
To transcend these limitations and directly maximize the collective utility of the generated lists, we introduce a reinforcement learning (RL) based post-training phase. Specifically, we leverage Group Relative Policy Optimization (GRPO)~\cite{shao2024deepseekmath} in our framework, which serves to align the model's global generation capability with the specific objective of maximizing list-wise utility.

\subsection{\textbf{Inference with Constrained Decoding.}} \label{sec:infer}

During post-training and inference, \name~generates the reranked list autoregressively. To ensure that the generated items strictly belong to the current candidate set $\mathcal{C}$ without duplication, we employ a Trie-based constrained decoding strategy~\cite{DBLP:journals/corr/abs-2510-24431}. 
Specifically, for each request, we construct a prefix tree (Trie) from the SIDs of the candidate items. At each decoding step, we constrain the output probability distribution by masking out invalid tokens that do not appear in the Trie's valid branches. Once an item is fully generated, we dynamically prune its corresponding path from the Trie to prevent repetition.
Since the candidate set is typically much smaller than the entire item set (i.e., $N \ll |\mathcal{V}|$) in reranking scenarios, the overhead of Trie construction and masking is negligible.

\section{Experiments}

Through both offline benchmarks and online industrial deployments, we aim to answer the following research questions:
\begin{itemize}[leftmargin=*]
    \item \textbf{RQ1:} How does \name~compare against state-of-the-art reranking baselines in standard offline evaluation settings?
    \item \textbf{RQ2:} Can \name~maintain its effectiveness when generating unseen lists under counterfactual evaluation?
    \item \textbf{RQ3:} Is the proposed decoupled two-stage optimization pipeline essential for achieving optimal performance?
    \item \textbf{RQ4:} How do variations in key hyperparameters influence the final performance of the model?
    \item \textbf{RQ5:} Is the global action space necessary in modern generative recommendation frameworks?
    \item \textbf{RQ6:} Does the model confer specific advantages in handling cold-start scenarios compared to traditional approaches?
    \item \textbf{RQ7:} Is \name~effective when deployed in a real-world, large-scale industrial production system?
\end{itemize}

\begin{table}
\centering
\caption{Dataset statistics.}
\label{tab:dataset_statistics}
\resizebox{0.9\linewidth}{!}{
\begin{tabular}{lcccc} 
\toprule
Dataset            & Users   & Items  & Interactions & Lists    \\ 
\midrule
MovieLens-1M       & 6,020   & 3,043  & 995,154      & 161,646  \\
Amazon Books       & 35,732  & 38,121 & 1,960,674    & 311,386  \\
Industrial Dataset & 200,775 & 17,014 & 5,173,698    & 589,751  \\
\bottomrule
\end{tabular}}
\end{table}

\begin{table*}[ht]
\centering
\caption{Overall experiments. The best results are highlighted in bold, and the second-best results are underlined. Rel. Improvement denotes the percentage improvement of the proposed method relative to the best-performing baseline.}
\label{tab:overall}
\resizebox{\linewidth}{!}{
\begin{tabular}{clcccccccccccc} 
\toprule
\multicolumn{2}{c}{\multirow{2}{*}{Methods}}                                                      & \multicolumn{4}{c}{Amazon Books}                                  & \multicolumn{4}{c}{MovieLens-1M}                                  & \multicolumn{4}{c}{Industrial Dataset}                             \\ 
\cmidrule{3-14}
\multicolumn{2}{c}{}                                                                              & Precision      & NDCG           & MAP            & F1             & Precision      & NDCG           & MAP            & F1             & Precision      & NDCG           & MAP            & F1              \\ 
\midrule
\multirow{4}{*}{\begin{tabular}[c]{@{}c@{}}Generator-\\Only\end{tabular}}      & DNN~\cite{covington2016youtube}              & 60.28          & 69.61          & 58.58          & 62.45          & 56.86          & 70.30          & 59.28          & 62.16          & 32.80          & 51.26          & 41.02          & 38.97           \\
& DLCM~\cite{ai2018learning}             & 66.80          & 75.88          & 65.39          & 69.28          & 62.31          & 73.87          & 63.82          & 67.96          & 47.86          & 73.03          & 63.16          & 56.89           \\
& PRM~\cite{pei2019prm}              & 67.86          & 76.88          & 66.44          & 70.42          & 60.09          & 72.85          & 62.21          & 65.51          & 47.89          & 67.09          & 54.59          & 56.88           \\
& GoalRank~\cite{zhang2025goalrank}         & \uline{80.35}  & \uline{84.88}  & \uline{77.91}  & \uline{83.44}  & \uline{73.56}  & \uline{83.43}  & \uline{76.16}  & \uline{80.15}  & \uline{60.16}  & \uline{88.08}  & \uline{82.06}  & \uline{71.48}   \\ 
\midrule
\multirow{5}{*}{\begin{tabular}[c]{@{}c@{}}Generator-\\Evaluator\end{tabular}} & PIER~\cite{shi2023pier}             & 71.14          & 80.22          & 71.62          & 73.74          & 62.74          & 75.99          & 65.98          & 68.74          & 56.69          & 80.15          & 70.52          & 67.52           \\
   & NAR4Rec~\cite{ren2024nar4rec}              & 70.08          & 79.46           & 70.69          & 72.66          & 62.81           & 75.01         & 65.42          & 68.31            & 51.93          & 65.16          & 53.13          & 61.56           \\
   & MG-E-3~\cite{yang2025mgeclig}              & 68.76          & 76.36          & 65.82          & 71.33          & 55.51          & 67.39          & 55.52          & 55.51          & 48.53          & 67.38          & 54.89          & 57.67           \\
   & MG-E-20~\cite{yang2025mgeclig}             & 72.99          & 78.68          & 68.66          & 75.72          & 58.66          & 69.86          & 58.60          & 64.18          & 51.58          & 69.29          & 57.26          & 61.08           \\
   & MG-E-100~\cite{yang2025mgeclig}            & 77.21          & 82.15          & 73.78          & 80.09          & 60.64          & 70.97          & 59.93          & 66.29          & 53.21          & 70.67          & 58.80          & 63.06           \\ 
\midrule \rowcolor{results}
   & \name          & \textbf{83.75} & \textbf{90.08} & \textbf{85.10} & \textbf{86.97} & \textbf{75.79} & \textbf{87.56} & \textbf{81.19} & \textbf{82.57} & \textbf{62.56} & \textbf{90.15} & \textbf{84.84} & \textbf{74.32}  \\
   \rowcolor{results}
   & Rel. Improvement & $\uparrow4.23\%$         & $\uparrow6.13\%$         & $\uparrow9.23\%$         & $\uparrow4.23\%$         & $\uparrow3.03\%$         & $\uparrow4.95\%$         & $\uparrow6.60\%$         & $\uparrow3.02\%$         & $\uparrow3.99\%$         & $\uparrow2.35\%$        & $\uparrow3.39\%$        & $\uparrow3.97\%$          \\
\bottomrule
\end{tabular}}
\end{table*}

\subsection{Dataset and Evaluation Metrics}

\textbf{Datasets.}
To comprehensively evaluate the performance of \name, we conduct experiments on three datasets, including two widely used public benchmarks (following previous work~\cite{zhang2025goalrank,mao2025denoising}), Amazon Books~\cite{mcauley2015image} and MovieLens-1M~\cite{harper2015movielens}, and a large-scale Industrial dataset collected from a real-world e-commerce platform. Amazon Books and MovieLens-1M serve as standard testbeds to ensure reproducibility, while the Industrial dataset is employed to verify the effectiveness and robustness of our method in a practical production environment. The detailed statistics of these datasets are summarized in Table~\ref{tab:dataset_statistics}.

\noindent\textbf{Evaluation Metrics.}
We adopt standard list-wise metrics to assess the quality of the reranked lists, specifically \texttt{Precision@K}, \texttt{NDCG@K}, \texttt{MAP@K}, and \texttt{F1@K}. These metrics evaluate the recommendation performance from various perspectives, including accuracy and ranking position. 

\subsection{Baselines}

To evaluate the effectiveness of \name, we compare it against a diverse set of representative baselines. We categorize these methods into Generator-only approaches and Generator-Evaluator frameworks.
\textbf{(1) Generator-only Baselines:} these methods employ a single model to directly generate the final ranked list, either through point-wise scoring or sequential generation, including \textbf{DNN}~\cite{covington2016youtube}, \textbf{DLCM}~\cite{ai2018learning}, \textbf{PRM}~\cite{pei2019prm}, \textbf{GoalRank}~\cite{zhang2025goalrank}. \textbf{(2) Generator-Evaluator Baselines:} these frameworks decouple the reranking process into two phases: generators first produce a set of candidate permutations, and a separate evaluator subsequently selects the optimal sequence, including \textbf{PIER}~\cite{shi2023pier}, \textbf{NAR4Rec}~\cite{ren2024nar4rec}, \textbf{MG-E}~\cite{yang2025mgeclig}.

\subsection{Implementation Details}

To construct the SIDs, we first generate dense semantic vectors by serializing item metadata into a unified text sequence. For the Amazon Books dataset, we concatenate the \texttt{title}, \texttt{categories}, \texttt{description}, \texttt{price}, and \texttt{brand} fields. For MovieLens-1M, we utilize the \texttt{title} and \texttt{genres}. These sequences are encoded using the pre-trained Qwen3-embedding-4B~\cite{qwen3embedding} model to obtain high-quality item representations. For the quantization process, we employ the RQ-Kmeans~\cite{zhou2025onerec} algorithm with a hierarchical depth of $M=4$ and the codebook size $256$.
Our model backbone is built upon the Transformer-based Encoder-Decoder architecture, following the configuration of standard T5~\cite{2020t5} models. We use the AdamW~\cite{loshchilov2017decoupled} optimizer for both training stages. 
In the supervised pre-training phase, we set the learning rate to $5\text{e}^{-4}$ with a weight decay of $0.01$. 
In the post-training phase, we reduce the learning rate to $5\text{e}^{-6}$. For the GRPO process, we set the sampling temperature $\tau=1.0$ and the group size $20$ to balance exploration and gradient estimation variance. 
During the inference phase, we employ beam search decoding with a beam size of $20$. Following previous work~\cite{zhang2025goalrank,mao2025denoising}, we configure the task to generate a list of length 6 from a candidate set of 50 items. Accordingly, the cutoff $K$ for all evaluation metrics is set to 6.

\subsection{Overall Experiments (RQ1)}

To demonstrate the superior performance of \name, we compare it against state-of-the-art baselines across three datasets under the standard offline evaluation setting. 
We utilize the ground-truth evaluation metrics (specifically, the arithmetic mean of the target metrics) directly as the reward signal during the post-training phase. The comparative results are summarized in Table~\ref{tab:overall}, from which we draw the following key observations: (1) \name~consistently outperforms all baselines across all datasets and metrics. This validates that anchoring the reranking process in a stable Global Item Space allows the model to learn a more robust and transferable valuation standard. (2) Generally, Generator-Evaluator methods outperform traditional Generator-only approaches (e.g., DLCM, PRM). This trend supports the prevailing view that decoupling generation and evaluation enables a more exhaustive exploration of the permutation space, as the evaluator can explicitly model intra-list correlations that generators may miss.
(3) GoalRank stands out as a formidable exception among Generator-only methods. By employing a group-relative optimization strategy similar to the one used in our post-training phase, GoalRank surpasses standard Generator-Evaluator methods. This indicates that advanced policy optimization can effectively bridge the gap between Generator-only and Generator-Evaluator paradigms. Nonetheless, \name~still outperforms GoalRank, highlighting that the benefits of our framework stem not just from the optimization method, but from the global space formulation.
(4) Regarding the MG-E variants, we observe that performance steadily improves as the number of generators increases (from MG-E-3 to MG-E-100). This confirms that expanding the search space via ensemble generation leads to better ranking quality. Nevertheless, \name~achieves superior results with a single generative policy, bypassing the high computational overhead of running hundreds of parallel generators.

\subsection{Counterfactual Experiments (RQ2)}

Standard offline evaluation relies on static historical logs, which inherently suffer from selection bias—we can only evaluate lists that were previously exposed by the logging policy. To rigorously assess \name~and other generative solutions in a more realistic environment, we employ a simulation-based evaluation on the MovieLens-1M dataset, following established protocols~\cite{zhao2023kuaisim}, in addition to the standard offline evaluation.
Figure~\ref{fig:sim_box} presents the box plot of the reward scores achieved by different models in the simulator. We observe the following: (1) \name~exhibits a significant performance advantage over all baselines. Its score distribution is notably higher, with both the median and the upper quartile far exceeding those of the strongest baselines. This indicates that \name~is not only better on average but also possesses a higher ceiling for optimizing list utility. (2) Results are consistent with our offline evaluation. GoalRank and MG-E-100 remain the most competitive baselines. Their relatively strong performance confirms that modeling list-wise interactions is crucial. However, they still fall short of \name, highlighting the limitations of position-dependent action spaces in exploring optimal policies.

\subsection{Ablation Study (RQ3)}

\begin{figure}
    \centering
    \includegraphics[width=\linewidth]{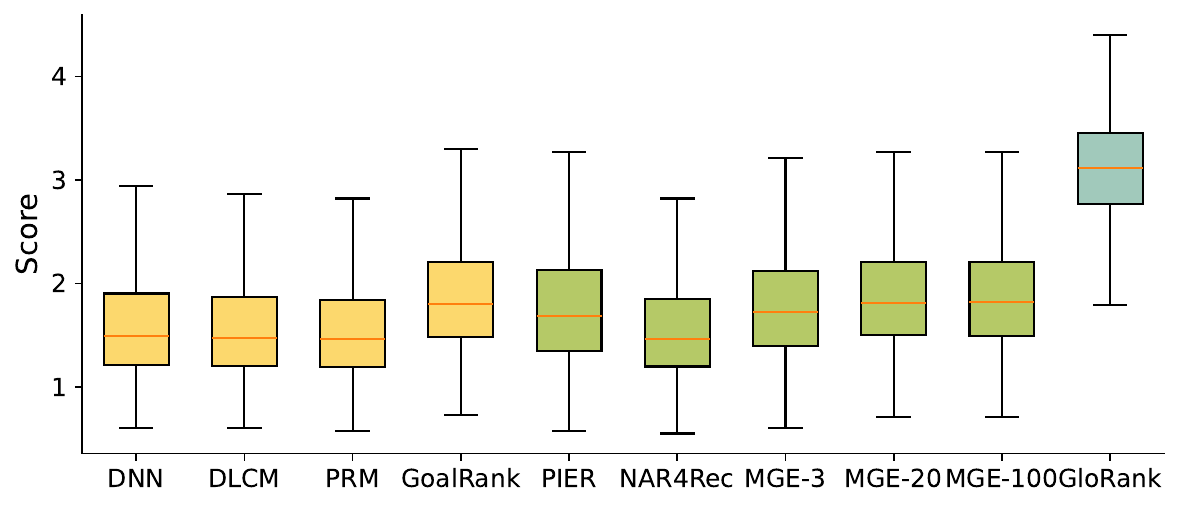}
    \caption{Score distribution comparison.}
    \label{fig:sim_box}
\end{figure}

\begin{figure*}
    \centering
    \includegraphics[width=0.8\linewidth]{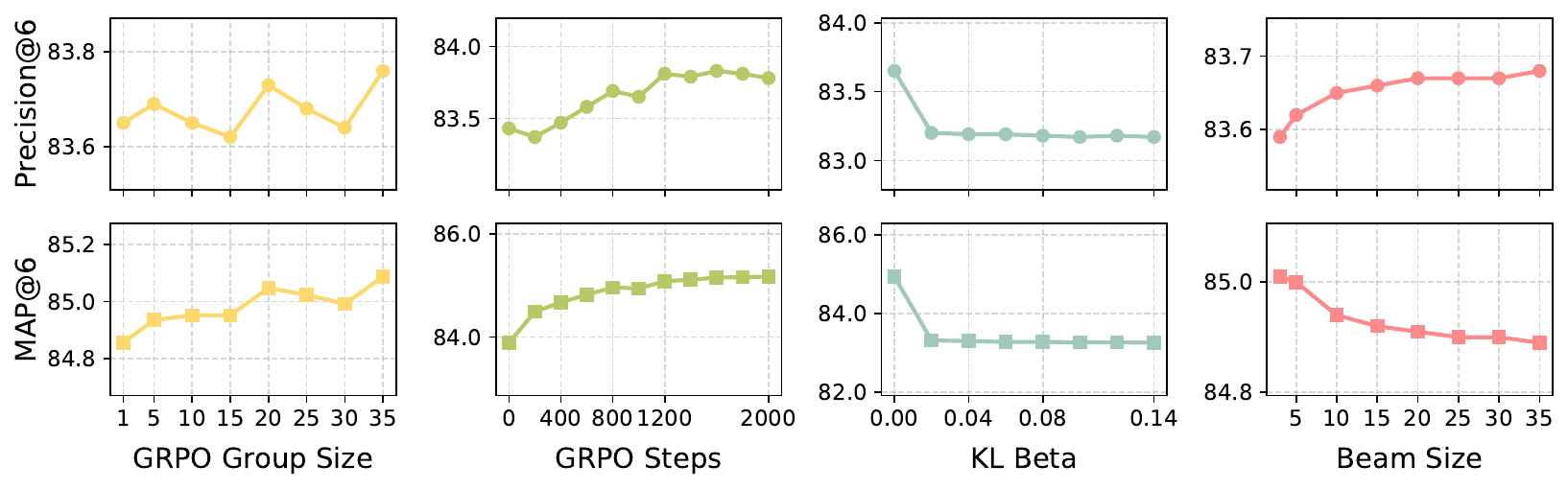}
    \caption{Hyperparameter analysis.}
    \label{fig:hyperparameter}
\end{figure*}

To verify the necessity and effectiveness of the proposed decoupled two-stage optimization pipeline, we conduct an ablation study by comparing \name~with two variants:
\begin{itemize}[leftmargin=*]
    \item \textbf{w/o pre:} This variant skips the supervised pre-training phase and directly initializes the model with random weights for the reward-driven post-training.
    \item \textbf{w/o post:} This variant employs only the pre-training phase.
\end{itemize}
The experimental results on Amazon Books and MovieLens-1M are summarized in Table~\ref{tab:ablation}. From these results, we draw the following key observations:
(1) First, both optimization stages contribute positively to the final performance. Removing either stage leads to a degradation across all evaluation metrics on both datasets, confirming that the combination of supervised learning and reward-based utility maximization is essential for an effective reranking framework. (2) Second, the pre-training phase serves as the fundamental cornerstone of \name. As observed, the w/o pre-training variant suffers a catastrophic performance drop (e.g., NDCG declines from 90.08\% to 24.13\% on Amazon Books). This indicates that without the initialization provided by supervised learning on high-quality demonstrations, the model struggles to comprehend the basic semantics of items or the syntax of valid ranking lists. (3) Third, while the post-training phase yields consistent improvements, its impact is particularly pronounced on ranking-sensitive metrics such as NDCG and MAP. For instance, on the MovieLens-1M dataset, post-training boosts MAP by 2.34\% (from 78.85\% to 81.19\%) and NDCG by 1.62\% (from 85.94\% to 87.56\%), compared to smaller gains in Precision and F1. We attribute this phenomenon to the distinct objectives of the two phases: the pre-training phase primarily incentivizes the model to distinguish positive items from the global pool, potentially underemphasizing the precise inter-item ordering within the list. In contrast, the post-training phase facilitates direct interaction with the reward model, providing explicit feedback on list-wise utility.

\subsection{Hyperparameter Analysis (RQ4)}

\begin{table}[t]
\centering
\caption{Ablation study. The best results are highlighted in bold, and the second-best results are underlined.}
\label{tab:ablation}
\resizebox{\linewidth}{!}{
\begin{tabular}{lcccccccc} 
\toprule
\multirow{2}{*}{Methods} & \multicolumn{4}{c}{Amazon Books}                                  & \multicolumn{4}{c}{MovieLens-1M}                                   \\ 
\cmidrule{2-9}
                         & Precision      & NDCG           & MAP            & F1             & Precision      & NDCG           & MAP            & F1              \\ 
\midrule
w/o pre                  & 20.85          & 24.13          & 14.00          & 21.64          & 18.53          & 24.04          & 14.19          & 20.09           \\
w/o post                 & \uline{83.44}  & \uline{89.15}  & \uline{83.68}  & \uline{86.63}  & \uline{75.51}  & \uline{85.94}  & \uline{78.85}  & \uline{82.28}   \\
Ours                     & \textbf{83.75} & \textbf{90.08} & \textbf{85.10} & \textbf{86.97} & \textbf{75.79} & \textbf{87.56} & \textbf{81.19} & \textbf{82.57}  \\
\bottomrule
\end{tabular}}
\end{table}

To assess the robustness of \name~and identify optimal configurations, we analyze the impact of four critical hyperparameters: the group size $G$, the number of training steps, and the KL regularization coefficient $\beta$ during the post-training phase, as well as the beam size $B$ during inference. The experimental results on the Amazon Books dataset are presented in Figure~\ref{fig:hyperparameter}. Note that we conduct experiments across all four metrics and observe that Precision exhibits trends highly consistent with F1, while MAP aligns closely with NDCG. Therefore, due to space limitations, we only present the results for Precision and MAP in this section.

\paragraph{\textbf{Impact of GRPO Configurations.}}
First, regarding the Group Size ($G$), we observe a consistent performance improvement across all metrics as $G$ increases. This aligns with the theoretical foundation of GRPO: a larger group size provides more samples for estimating the group-relative baseline, thereby reducing the variance of the advantage estimator and stabilizing the policy gradients.
Second, regarding Training Steps, the model's performance improves rapidly in the initial stages and eventually reaches a stable plateau. This indicates that \name~converges efficiently without suffering from severe collapse.
Third, regarding the KL Coefficient ($\beta$) in the GRPO objective~\cite{shao2024deepseekmath}, we find that the best performance is achieved when $\beta=0$ (i.e., removing the KL penalty). This observation is consistent with recent papers~\cite{hu2025open,liu2025understanding,yu2025dapo}. 
We attribute this to the fact that strict adherence to the pre-trained reference policy limits the model's exploration, whereas removing the constraint allows the policy to fully optimize for the ranking utility.

\paragraph{\textbf{Impact of Inference Beam Size.}}
The analysis of Beam Size reveals an interesting trade-off between set-based and rank-based metrics. As the beam size increases, \texttt{Precision@6} generally improves, suggesting that a broader search space helps the model identify a more accurate set of relevant items. However, \texttt{MAP@6} exhibits an inverse trend, degrading slightly as the beam size grows. This phenomenon suggests that while larger beams find high-probability sequences that contain more relevant items, the autoregressive probability maximization does not strictly correlate with the optimal ordering required for MAP. 

\subsection{Impact of Global Action Space (RQ5)}

\begin{figure}
    \centering
    \includegraphics[width=1.0\linewidth]{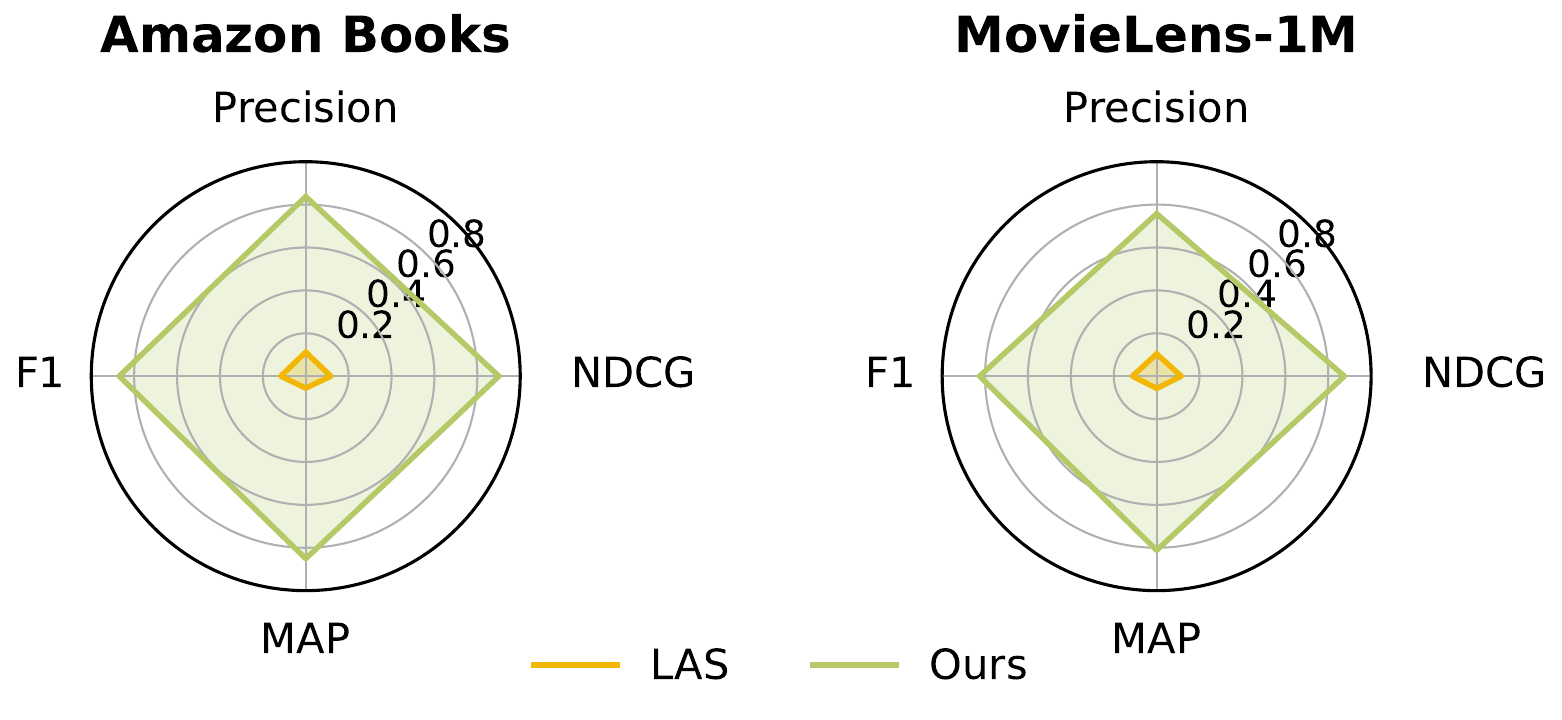}
    \caption{Performance comparison between \name~(Global Action Space) and the local action space variant (LAS).}
    \label{fig:global_vs_local}
\end{figure}

To answer RQ5 and verify the fundamental necessity of the global action space in modern generative frameworks, we compare \name~with a local action space (LAS) variant. 
Instead of generating global SIDs, this variant is restricted to a position-dependent action space, forcing the model to generate local indices (e.g., tokens representing $1$ to $N$) to select items based on their relative positions in the input candidate list. Figure \ref{fig:global_vs_local} visualizes the performance comparison across four metrics on two public datasets. 
The results reveal a substantial performance degradation for the local action space variant.
This is because the local action space introduces noise during training, which hinders optimization and prevents stable convergence.
Furthermore, empirical analysis indicates that the LAS variant tends to memorize high-frequency index combinations rather than selecting appropriate item sequences based on the context, leading to poor performance. This observation provides support for our theoretical analysis on semantic inconsistency.

\subsection{Generalization to Unseen Items (RQ6)}

\begin{figure}
    \centering
    \includegraphics[width=\linewidth]{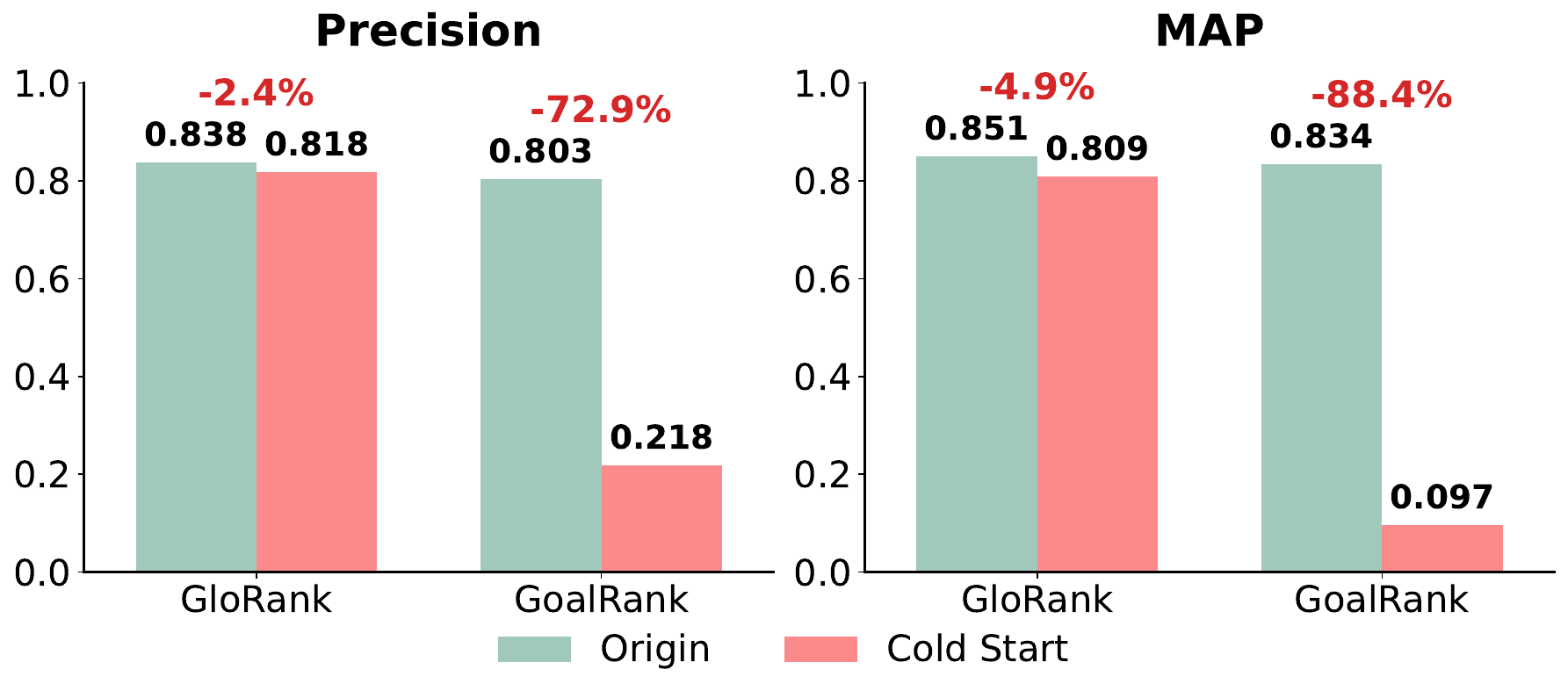}
    \caption{Robustness comparison between \name~and GoalRank under normal and cold-start scenarios.
    }
    \label{fig:cold-start}
\end{figure}

We conduct a specialized experiment to investigate the model's robustness in handling cold-start scenarios—a perennial challenge in recommender systems where new items lack historical interaction data. We simulate a cold-start setting by randomly masking 5\% of the items in the test set from the Amazon Books training set. These items are removed entirely during the training phase but are retained in the test set. Figure~\ref{fig:cold-start} illustrates the performance comparison between \name~and the strongest baseline, GoalRank, under both normal and cold-start settings. From the results, 
we can see that 
\name~demonstrates significantly superior robustness compared to GoalRank. While GoalRank suffers a catastrophic performance drop in the cold-start setting (e.g., MAP decreases by over 88.4\%), \name~maintains a high level of accuracy with only a marginal decline. This validates that the global item space effectively transfers knowledge from seen to unseen items. Even if a specific item ID is novel, its constituent semantic tokens share similarities with known items, allowing the model to infer its utility based on content rather than memorized identity. 
In contrast, 
GoalRank relies on learning distinct embeddings for atomic IDs. When encountering an unseen item, the model must use a randomly initialized embedding, which acts as noise. Crucially, because reranking is a list-wise task that explicitly models inter-item correlations, this noise is not isolated to the unseen item itself. Instead, it propagates through the interactions, corrupting the contextual representation of the entire list and impairing the scoring of even the known items in the same batch.

\subsection{Online Experiments (RQ7)}

\begin{table}
\centering
\caption{Online A/B testing results.All improvements are statistically significant with student t-test $p<0.05$.}
\label{tab:online_res}
\resizebox{\linewidth}{!}{
\begin{tabular}{lccccc} 
\toprule
Metric     & Watch Time & Effective View & Like & Comment & Forward  \\ 
\hline
Imp.  & +0.095\%          & +0.111\%         & +0.286\%   & +0.462\%   & +0.447\%            \\
\bottomrule
\end{tabular}}
\end{table}

To validate the practical effectiveness and robustness of \name~in a real-world production environment, we conduct online A/B testing on a leading content platform. This platform serves over 400 million Daily Active Users (DAU) and manages a corpus of tens of millions of items. 
We allocate  7.8\% of the total live traffic to the experimental group, while a control group of equal size was served by GoalRank, the current state-of-the-art baseline deployed in the production system. The experiment was conducted over a period of 14 days to account for weekly fluctuations and ensure statistical significance. We focus on several core business metrics to evaluate user engagement and platform growth.
As shown in Table~\ref{tab:online_res}, \name~achieved consistent and significant improvements across all monitored metrics compared to the baseline.
Specifically, the increase in Effective Views and Total Time Spent indicates that the reranked lists generated by \name~effectively promote items that trigger both clicks and sustained consumption. Furthermore, we observe a growth in Active Devices by 0.030\%, which suggests that \name~also positively impacts user long-term retention and stickiness.
These results confirm that by modeling the global action space, \name~captures intrinsic user interests more accurately than position-dependent approaches.

\section{Related Work}

\subsection{List-wise Reranking in Recommendations}

Reranking serves as the decisive final phase in the multi-stage recommendation pipeline~\cite{liu2022rerankingsurvey,lin2025gref,meng2025generative,wang2025nlgr,huzhang2021aliexpress,feng2021revisit,xi2022multi}. Formally, the task aims to select and order a target list of size $K$ from a larger candidate set of size $N$ ($K < N$). Unlike the preceding ranking stage that primarily estimates point-wise relevance~\cite{dai2025onepiece,huang2025towards}, reranking explicitly models intra-list correlations, such as diversity and complementarity, to maximize the collective utility of the generated sequence.
Existing neural reranking approaches can be broadly categorized into two paradigms: Generator-only and Generator-Evaluator methods.
Generator-only methods~\cite{ai2018learning,bello2018seq2slate,pei2019prm,liu2023gfn4list,zhang2025goalrank} aim to directly output the final ranked list through a unified model. Representative works like DLCM~\cite{ai2018learning} utilized RNNs to capture sequential dependencies. Subsequently, PRM~\cite{pei2019prm} introduced the transformer architecture for reranking, leveraging self-attention mechanisms to explicitly model the mutual influence among all candidate items. More recently, GoalRank~\cite{zhang2025goalrank} revisited this paradigm from a one-stage perspective, employing group-relative optimization to directly maximize list-wise rewards.
Generator-Evaluator methods~\cite{shi2023pier,ren2024nar4rec,yang2025mgeclig}, in contrast, decouple the task into two distinct phases: a generator first proposes high-quality candidate permutations, and an evaluator selects the best one. PIER~\cite{shi2023pier} employs a SimHash-based generator followed by an omnidirectional context-aware evaluator. NAR4Rec~\cite{ren2024nar4rec} introduces a non-autoregressive generator with contrastive decoding to enhance efficiency. To further expand the search space, MG-E~\cite{yang2025mgeclig} utilizes an ensemble of multiple generators to produce diverse permutations for the evaluator. \name~fundamentally diverges from these approaches by anchoring the reranking process in a Global Item Space. \name~transforms list generation into a token prediction task over a fixed global vocabulary. This paradigm shift allows \name~to apply a consistent, position-invariant estimation to local candidates.

\subsection{Generative Recommendations}

Generative recommendation represents a paradigm shift from the traditional ``retrieve-and-rank'' strategy to a ``next-item generation'' framework. Unlike discriminative models that calculate matching scores for user-item pairs, generative models reformulate recommendation as a sequence generation task, predicting the next target item conditioned on the context history~\cite{li2025survey,chen2024hllm,wang2024learnable,DBLP:conf/icde/ZhengHLCZCW24}. Following recent comprehensive surveys, we discuss the advancements in this field from three decoupled perspectives: tokenization, architecture, and optimization. (1) Tokenization. Tokenization serves as the bridge between discrete items and the generative model. While early sequence models relied on random atomic IDs, modern generative approaches explore more semantic-rich representations. 
P5~\cite{geng2022recommendation} proposes a unified text-to-text framework that represents items using raw text tokens (e.g., titles) via natural language prompts. 
To address the inefficiency of text and the lack of semantics in atomic IDs, TIGER~\cite{rajput2023recommender} introduces Semantic IDs, utilizing RQ-VAE to quantize item embeddings into hierarchical discrete tuples, thereby enabling the model to capture collaborative signals directly within the ID structure.
(2) Architecture. The backbone architecture determines how the model processes item sequences and generates future interactions. The landscape is primarily divided into Encoder-Decoder and Decoder-only structures. 
M6-Rec~\cite{cui2022m6} utilizes a standard Transformer Encoder-Decoder to support multi-modal tasks including retrieval and explanation. 
In contrast, some work favor Decoder-only architectures for their superior scaling properties. For instance, HSTU~\cite{DBLP:conf/icml/ZhaiLLWLCGGGHLS24} designs a high-performance generative architecture specifically optimized for sequential transduction in large-scale recommendation. 
Similarly, LLaRA~\cite{DBLP:conf/sigir/LiaoL0WYW024} adapts the LlaMA architecture to the recommendation domain, bridging the gap between pre-trained language knowledge and sequential user behavior.
(3) Optimization.
Optimization strategies in generative recommendation focus on aligning the model's generation probability with user preferences. Beyond the standard Next Token Prediction (NTP) objective, Instruction Tuning has emerged as a key technique to enhance generalization. 
TallRec~\cite{bao2023tallrec} demonstrates that fine-tuning Large Language Models (LLMs) with recommendation-specific instructions significantly improves performance in few-shot settings. 
Furthermore, researchers are increasingly exploring alignment techniques to ensure the generated recommendations satisfy complex utility constraints beyond simple accuracy. 
In contrast to these works, which primarily focus on next-item generation, \name~targets the reranking setting and reformulates reranking from local index selection to global identifier generation.

\section{Conclusion}

In this paper, we propose \name, a novel generative framework that fundamentally addresses the limitations of existing position-dependent reranking paradigms. By identifying candidate-position coupling as a critical source of semantic misalignment and optimization instability, we advocate for a paradigm shift from local discrimination to global generation. 
Specifically, \name~reformulates list-wise reranking as a sequence generation task over a fixed global vocabulary constrained by the candidates in the decoding tree. This design effectively anchors the scoring mechanism, ensuring consistent item evaluation independent of dynamic input permutations. 
To fully exploit this architecture, we devise a decoupled optimization pipeline that combines supervised pre-training for semantic alignment with GRPO for utility maximization. 
Extensive experiments on two public benchmarks and a large-scale industrial dataset, coupled with online A/B tests, demonstrate that \name~consistently outperforms state-of-the-art baselines.

\clearpage

\balance
\bibliographystyle{ACM-Reference-Format}
\bibliography{reference}


\begin{thebibliography}{50}


\ifx \showCODEN    \undefined \def \showCODEN     #1{\unskip}     \fi
\ifx \showISBNx    \undefined \def \showISBNx     #1{\unskip}     \fi
\ifx \showISBNxiii \undefined \def \showISBNxiii  #1{\unskip}     \fi
\ifx \showISSN     \undefined \def \showISSN      #1{\unskip}     \fi
\ifx \showLCCN     \undefined \def \showLCCN      #1{\unskip}     \fi
\ifx \shownote     \undefined \def \shownote      #1{#1}          \fi
\ifx \showarticletitle \undefined \def \showarticletitle #1{#1}   \fi
\ifx \showURL      \undefined \def \showURL       {\relax}        \fi
\providecommand\bibfield[2]{#2}
\providecommand\bibinfo[2]{#2}
\providecommand\natexlab[1]{#1}
\providecommand\showeprint[2][]{arXiv:#2}

\bibitem[Ai et~al\mbox{.}(2018)]%
        {ai2018learning}
\bibfield{author}{\bibinfo{person}{Qingyao Ai}, \bibinfo{person}{Keping Bi}, \bibinfo{person}{Jiafeng Guo}, {and} \bibinfo{person}{W.~Bruce Croft}.} \bibinfo{year}{2018}\natexlab{}.
\newblock \showarticletitle{Learning a Deep Listwise Context Model for Ranking Refinement}. In \bibinfo{booktitle}{\emph{The 41st International {ACM} {SIGIR} Conference on Research {\&} Development in Information Retrieval, {SIGIR} 2018, Ann Arbor, MI, USA, July 08-12, 2018}}, \bibfield{editor}{\bibinfo{person}{Kevyn Collins{-}Thompson}, \bibinfo{person}{Qiaozhu Mei}, \bibinfo{person}{Brian~D. Davison}, \bibinfo{person}{Yiqun Liu}, {and} \bibinfo{person}{Emine Yilmaz}} (Eds.). \bibinfo{publisher}{{ACM}}, \bibinfo{pages}{135--144}.
\newblock
\href{https://doi.org/10.1145/3209978.3209985}{doi:\nolinkurl{10.1145/3209978.3209985}}


\bibitem[Bao et~al\mbox{.}(2023)]%
        {bao2023tallrec}
\bibfield{author}{\bibinfo{person}{Keqin Bao}, \bibinfo{person}{Jizhi Zhang}, \bibinfo{person}{Yang Zhang}, \bibinfo{person}{Wenjie Wang}, \bibinfo{person}{Fuli Feng}, {and} \bibinfo{person}{Xiangnan He}.} \bibinfo{year}{2023}\natexlab{}.
\newblock \showarticletitle{TALLRec: An Effective and Efficient Tuning Framework to Align Large Language Model with Recommendation}. In \bibinfo{booktitle}{\emph{Proceedings of the 17th {ACM} Conference on Recommender Systems, RecSys 2023, Singapore, Singapore, September 18-22, 2023}}, \bibfield{editor}{\bibinfo{person}{Jie Zhang}, \bibinfo{person}{Li~Chen}, \bibinfo{person}{Shlomo Berkovsky}, \bibinfo{person}{Min Zhang}, \bibinfo{person}{Tommaso~Di Noia}, \bibinfo{person}{Justin Basilico}, \bibinfo{person}{Luiz Pizzato}, {and} \bibinfo{person}{Yang Song}} (Eds.). \bibinfo{publisher}{{ACM}}, \bibinfo{pages}{1007--1014}.
\newblock
\href{https://doi.org/10.1145/3604915.3608857}{doi:\nolinkurl{10.1145/3604915.3608857}}


\bibitem[Bello et~al\mbox{.}(2018)]%
        {bello2018seq2slate}
\bibfield{author}{\bibinfo{person}{Irwan Bello}, \bibinfo{person}{Sayali Kulkarni}, \bibinfo{person}{Sagar Jain}, \bibinfo{person}{Craig Boutilier}, \bibinfo{person}{Ed~Huai{-}hsin Chi}, \bibinfo{person}{Elad Eban}, \bibinfo{person}{Xiyang Luo}, \bibinfo{person}{Alan Mackey}, {and} \bibinfo{person}{Ofer Meshi}.} \bibinfo{year}{2018}\natexlab{}.
\newblock \showarticletitle{Seq2Slate: Re-ranking and Slate Optimization with RNNs}.
\newblock \bibinfo{journal}{\emph{CoRR}}  \bibinfo{volume}{abs/1810.02019} (\bibinfo{year}{2018}).
\newblock
\showeprint[arXiv]{1810.02019}
\urldef\tempurl%
\url{http://arxiv.org/abs/1810.02019}
\showURL{%
\tempurl}


\bibitem[Chen et~al\mbox{.}(2024)]%
        {chen2024hllm}
\bibfield{author}{\bibinfo{person}{Junyi Chen}, \bibinfo{person}{Lu Chi}, \bibinfo{person}{Bingyue Peng}, {and} \bibinfo{person}{Zehuan Yuan}.} \bibinfo{year}{2024}\natexlab{}.
\newblock \showarticletitle{Hllm: Enhancing sequential recommendations via hierarchical large language models for item and user modeling}.
\newblock \bibinfo{journal}{\emph{arXiv preprint arXiv:2409.12740}} (\bibinfo{year}{2024}).
\newblock


\bibitem[Covington et~al\mbox{.}(2016)]%
        {covington2016youtube}
\bibfield{author}{\bibinfo{person}{Paul Covington}, \bibinfo{person}{Jay Adams}, {and} \bibinfo{person}{Emre Sargin}.} \bibinfo{year}{2016}\natexlab{}.
\newblock \showarticletitle{Deep Neural Networks for YouTube Recommendations}. In \bibinfo{booktitle}{\emph{Proceedings of the 10th {ACM} Conference on Recommender Systems, Boston, MA, USA, September 15-19, 2016}}, \bibfield{editor}{\bibinfo{person}{Shilad Sen}, \bibinfo{person}{Werner Geyer}, \bibinfo{person}{Jill Freyne}, {and} \bibinfo{person}{Pablo Castells}} (Eds.). \bibinfo{publisher}{{ACM}}, \bibinfo{pages}{191--198}.
\newblock
\href{https://doi.org/10.1145/2959100.2959190}{doi:\nolinkurl{10.1145/2959100.2959190}}


\bibitem[Cui et~al\mbox{.}(2022)]%
        {cui2022m6}
\bibfield{author}{\bibinfo{person}{Zeyu Cui}, \bibinfo{person}{Jianxin Ma}, \bibinfo{person}{Chang Zhou}, \bibinfo{person}{Jingren Zhou}, {and} \bibinfo{person}{Hongxia Yang}.} \bibinfo{year}{2022}\natexlab{}.
\newblock \showarticletitle{M6-rec: Generative pretrained language models are open-ended recommender systems}.
\newblock \bibinfo{journal}{\emph{arXiv preprint arXiv:2205.08084}} (\bibinfo{year}{2022}).
\newblock


\bibitem[Cuturi(2013)]%
        {cuturi2013sinkhorn}
\bibfield{author}{\bibinfo{person}{Marco Cuturi}.} \bibinfo{year}{2013}\natexlab{}.
\newblock \showarticletitle{Sinkhorn Distances: Lightspeed Computation of Optimal Transport}. In \bibinfo{booktitle}{\emph{Advances in Neural Information Processing Systems 26: 27th Annual Conference on Neural Information Processing Systems 2013. Proceedings of a meeting held December 5-8, 2013, Lake Tahoe, Nevada, United States}}, \bibfield{editor}{\bibinfo{person}{Christopher J.~C. Burges}, \bibinfo{person}{L{\'{e}}on Bottou}, \bibinfo{person}{Zoubin Ghahramani}, {and} \bibinfo{person}{Kilian~Q. Weinberger}} (Eds.). \bibinfo{pages}{2292--2300}.
\newblock
\urldef\tempurl%
\url{https://proceedings.neurips.cc/paper/2013/hash/af21d0c97db2e27e13572cbf59eb343d-Abstract.html}
\showURL{%
\tempurl}


\bibitem[Dai et~al\mbox{.}(2025)]%
        {dai2025onepiece}
\bibfield{author}{\bibinfo{person}{Sunhao Dai}, \bibinfo{person}{Jiakai Tang}, \bibinfo{person}{Jiahua Wu}, \bibinfo{person}{Kun Wang}, \bibinfo{person}{Yuxuan Zhu}, \bibinfo{person}{Bingjun Chen}, \bibinfo{person}{Bangyang Hong}, \bibinfo{person}{Yu Zhao}, \bibinfo{person}{Cong Fu}, \bibinfo{person}{Kangle Wu}, {et~al\mbox{.}}} \bibinfo{year}{2025}\natexlab{}.
\newblock \showarticletitle{Onepiece: Bringing context engineering and reasoning to industrial cascade ranking system}.
\newblock \bibinfo{journal}{\emph{arXiv preprint arXiv:2509.18091}} (\bibinfo{year}{2025}).
\newblock


\bibitem[Deng et~al\mbox{.}(2025)]%
        {deng2025onerec}
\bibfield{author}{\bibinfo{person}{Jiaxin Deng}, \bibinfo{person}{Shiyao Wang}, \bibinfo{person}{Kuo Cai}, \bibinfo{person}{Lejian Ren}, \bibinfo{person}{Qigen Hu}, \bibinfo{person}{Weifeng Ding}, \bibinfo{person}{Qiang Luo}, {and} \bibinfo{person}{Guorui Zhou}.} \bibinfo{year}{2025}\natexlab{}.
\newblock \showarticletitle{Onerec: Unifying retrieve and rank with generative recommender and iterative preference alignment}.
\newblock \bibinfo{journal}{\emph{arXiv preprint arXiv:2502.18965}} (\bibinfo{year}{2025}).
\newblock


\bibitem[Feng et~al\mbox{.}(2021)]%
        {feng2021revisit}
\bibfield{author}{\bibinfo{person}{Yufei Feng}, \bibinfo{person}{Yu Gong}, \bibinfo{person}{Fei Sun}, \bibinfo{person}{Junfeng Ge}, {and} \bibinfo{person}{Wenwu Ou}.} \bibinfo{year}{2021}\natexlab{}.
\newblock \showarticletitle{Revisit recommender system in the permutation prospective}.
\newblock \bibinfo{journal}{\emph{arXiv preprint arXiv:2102.12057}} (\bibinfo{year}{2021}).
\newblock


\bibitem[Fu et~al\mbox{.}(2025)]%
        {fu2025forge}
\bibfield{author}{\bibinfo{person}{Kairui Fu}, \bibinfo{person}{Tao Zhang}, \bibinfo{person}{Shuwen Xiao}, \bibinfo{person}{Ziyang Wang}, \bibinfo{person}{Xinming Zhang}, \bibinfo{person}{Chenchi Zhang}, \bibinfo{person}{Yuliang Yan}, \bibinfo{person}{Junjun Zheng}, \bibinfo{person}{Yu Li}, \bibinfo{person}{Zhihong Chen}, \bibinfo{person}{Jian Wu}, \bibinfo{person}{Xiangheng Kong}, \bibinfo{person}{Shengyu Zhang}, \bibinfo{person}{Kun Kuang}, \bibinfo{person}{Yu{-}Gang Jiang}, {and} \bibinfo{person}{Bo Zheng}.} \bibinfo{year}{2025}\natexlab{}.
\newblock \showarticletitle{{FORGE:} Forming Semantic Identifiers for Generative Retrieval in Industrial Datasets}.
\newblock \bibinfo{journal}{\emph{CoRR}}  \bibinfo{volume}{abs/2509.20904} (\bibinfo{year}{2025}).
\newblock
\showeprint[arXiv]{2509.20904}
\href{https://doi.org/10.48550/ARXIV.2509.20904}{doi:\nolinkurl{10.48550/ARXIV.2509.20904}}


\bibitem[Gallagher et~al\mbox{.}(2019)]%
        {gallagher2019cascade}
\bibfield{author}{\bibinfo{person}{Luke Gallagher}, \bibinfo{person}{Ruey{-}Cheng Chen}, \bibinfo{person}{Roi Blanco}, {and} \bibinfo{person}{J.~Shane Culpepper}.} \bibinfo{year}{2019}\natexlab{}.
\newblock \showarticletitle{Joint Optimization of Cascade Ranking Models}. In \bibinfo{booktitle}{\emph{Proceedings of the Twelfth {ACM} International Conference on Web Search and Data Mining, {WSDM} 2019, Melbourne, VIC, Australia, February 11-15, 2019}}, \bibfield{editor}{\bibinfo{person}{J.~Shane Culpepper}, \bibinfo{person}{Alistair Moffat}, \bibinfo{person}{Paul~N. Bennett}, {and} \bibinfo{person}{Kristina Lerman}} (Eds.). \bibinfo{publisher}{{ACM}}, \bibinfo{pages}{15--23}.
\newblock
\href{https://doi.org/10.1145/3289600.3290986}{doi:\nolinkurl{10.1145/3289600.3290986}}


\bibitem[Geng et~al\mbox{.}(2022)]%
        {geng2022recommendation}
\bibfield{author}{\bibinfo{person}{Shijie Geng}, \bibinfo{person}{Shuchang Liu}, \bibinfo{person}{Zuohui Fu}, \bibinfo{person}{Yingqiang Ge}, {and} \bibinfo{person}{Yongfeng Zhang}.} \bibinfo{year}{2022}\natexlab{}.
\newblock \showarticletitle{Recommendation as Language Processing {(RLP):} {A} Unified Pretrain, Personalized Prompt {\&} Predict Paradigm {(P5)}}. In \bibinfo{booktitle}{\emph{RecSys '22: Sixteenth {ACM} Conference on Recommender Systems, Seattle, WA, USA, September 18 - 23, 2022}}, \bibfield{editor}{\bibinfo{person}{Jennifer Golbeck}, \bibinfo{person}{F.~Maxwell Harper}, \bibinfo{person}{Vanessa Murdock}, \bibinfo{person}{Michael~D. Ekstrand}, \bibinfo{person}{Bracha Shapira}, \bibinfo{person}{Justin Basilico}, \bibinfo{person}{Keld~T. Lundgaard}, {and} \bibinfo{person}{Even Oldridge}} (Eds.). \bibinfo{publisher}{{ACM}}, \bibinfo{pages}{299--315}.
\newblock
\href{https://doi.org/10.1145/3523227.3546767}{doi:\nolinkurl{10.1145/3523227.3546767}}


\bibitem[Harper and Konstan(2016)]%
        {harper2015movielens}
\bibfield{author}{\bibinfo{person}{F.~Maxwell Harper} {and} \bibinfo{person}{Joseph~A. Konstan}.} \bibinfo{year}{2016}\natexlab{}.
\newblock \showarticletitle{The MovieLens Datasets: History and Context}.
\newblock \bibinfo{journal}{\emph{{ACM} Trans. Interact. Intell. Syst.}} \bibinfo{volume}{5}, \bibinfo{number}{4} (\bibinfo{year}{2016}), \bibinfo{pages}{19:1--19:19}.
\newblock
\href{https://doi.org/10.1145/2827872}{doi:\nolinkurl{10.1145/2827872}}


\bibitem[Hu et~al\mbox{.}(2025)]%
        {hu2025open}
\bibfield{author}{\bibinfo{person}{Jingcheng Hu}, \bibinfo{person}{Yinmin Zhang}, \bibinfo{person}{Qi Han}, \bibinfo{person}{Daxin Jiang}, \bibinfo{person}{Xiangyu Zhang}, {and} \bibinfo{person}{Heung-Yeung Shum}.} \bibinfo{year}{2025}\natexlab{}.
\newblock \showarticletitle{Open-reasoner-zero: An open source approach to scaling up reinforcement learning on the base model}.
\newblock \bibinfo{journal}{\emph{arXiv preprint arXiv:2503.24290}} (\bibinfo{year}{2025}).
\newblock


\bibitem[Huang et~al\mbox{.}(2025)]%
        {huang2025towards}
\bibfield{author}{\bibinfo{person}{Yanhua Huang}, \bibinfo{person}{Yuqi Chen}, \bibinfo{person}{Xiong Cao}, \bibinfo{person}{Rui Yang}, \bibinfo{person}{Mingliang Qi}, \bibinfo{person}{Yinghao Zhu}, \bibinfo{person}{Qingchang Han}, \bibinfo{person}{Yaowei Liu}, \bibinfo{person}{Zhaoyu Liu}, \bibinfo{person}{Xuefeng Yao}, {et~al\mbox{.}}} \bibinfo{year}{2025}\natexlab{}.
\newblock \showarticletitle{Towards Large-scale Generative Ranking}.
\newblock \bibinfo{journal}{\emph{arXiv preprint arXiv:2505.04180}} (\bibinfo{year}{2025}).
\newblock


\bibitem[Huzhang et~al\mbox{.}(2023)]%
        {huzhang2021aliexpress}
\bibfield{author}{\bibinfo{person}{Guangda Huzhang}, \bibinfo{person}{Zhen{-}Jia Pang}, \bibinfo{person}{Yongqing Gao}, \bibinfo{person}{Yawen Liu}, \bibinfo{person}{Weijie Shen}, \bibinfo{person}{Wen{-}Ji Zhou}, \bibinfo{person}{Qianying Lin}, \bibinfo{person}{Qing Da}, \bibinfo{person}{Anxiang Zeng}, \bibinfo{person}{Han Yu}, \bibinfo{person}{Yang Yu}, {and} \bibinfo{person}{Zhi{-}Hua Zhou}.} \bibinfo{year}{2023}\natexlab{}.
\newblock \showarticletitle{AliExpress Learning-to-Rank: Maximizing Online Model Performance Without Going Online}.
\newblock \bibinfo{journal}{\emph{{IEEE} Trans. Knowl. Data Eng.}} \bibinfo{volume}{35}, \bibinfo{number}{2} (\bibinfo{year}{2023}), \bibinfo{pages}{1214--1226}.
\newblock
\href{https://doi.org/10.1109/TKDE.2021.3098898}{doi:\nolinkurl{10.1109/TKDE.2021.3098898}}


\bibitem[Ju et~al\mbox{.}(2025)]%
        {ju2025generative}
\bibfield{author}{\bibinfo{person}{Clark~Mingxuan Ju}, \bibinfo{person}{Liam Collins}, \bibinfo{person}{Leonardo Neves}, \bibinfo{person}{Bhuvesh Kumar}, \bibinfo{person}{Louis~Yufeng Wang}, \bibinfo{person}{Tong Zhao}, {and} \bibinfo{person}{Neil Shah}.} \bibinfo{year}{2025}\natexlab{}.
\newblock \showarticletitle{Generative Recommendation with Semantic IDs: {A} Practitioner's Handbook}. In \bibinfo{booktitle}{\emph{Proceedings of the 34th {ACM} International Conference on Information and Knowledge Management, {CIKM} 2025, Seoul, Republic of Korea, November 10-14, 2025}}, \bibfield{editor}{\bibinfo{person}{Meeyoung Cha}, \bibinfo{person}{Chanyoung Park}, \bibinfo{person}{Noseong Park}, \bibinfo{person}{Carl Yang}, \bibinfo{person}{Senjuti~Basu Roy}, \bibinfo{person}{Jessie Li}, \bibinfo{person}{Jaap Kamps}, \bibinfo{person}{Kijung Shin}, \bibinfo{person}{Bryan Hooi}, {and} \bibinfo{person}{Lifang He}} (Eds.). \bibinfo{publisher}{{ACM}}, \bibinfo{pages}{6420--6425}.
\newblock
\href{https://doi.org/10.1145/3746252.3761612}{doi:\nolinkurl{10.1145/3746252.3761612}}


\bibitem[Kong et~al\mbox{.}(2025)]%
        {DBLP:journals/corr/abs-2510-24431}
\bibfield{author}{\bibinfo{person}{Xiaoyu Kong}, \bibinfo{person}{Leheng Sheng}, \bibinfo{person}{Junfei Tan}, \bibinfo{person}{Yuxin Chen}, \bibinfo{person}{Jiancan Wu}, \bibinfo{person}{An Zhang}, \bibinfo{person}{Xiang Wang}, {and} \bibinfo{person}{Xiangnan He}.} \bibinfo{year}{2025}\natexlab{}.
\newblock \showarticletitle{MiniOneRec: An Open-Source Framework for Scaling Generative Recommendation}.
\newblock \bibinfo{journal}{\emph{CoRR}}  \bibinfo{volume}{abs/2510.24431} (\bibinfo{year}{2025}).
\newblock
\showeprint[arXiv]{2510.24431}
\href{https://doi.org/10.48550/ARXIV.2510.24431}{doi:\nolinkurl{10.48550/ARXIV.2510.24431}}


\bibitem[Li et~al\mbox{.}(2025)]%
        {li2025survey}
\bibfield{author}{\bibinfo{person}{Xiaopeng Li}, \bibinfo{person}{Bo Chen}, \bibinfo{person}{Junda She}, \bibinfo{person}{Shiteng Cao}, \bibinfo{person}{You Wang}, \bibinfo{person}{Qinlin Jia}, \bibinfo{person}{Haiying He}, \bibinfo{person}{Zheli Zhou}, \bibinfo{person}{Zhao Liu}, \bibinfo{person}{Ji Liu}, {et~al\mbox{.}}} \bibinfo{year}{2025}\natexlab{}.
\newblock \showarticletitle{A Survey of Generative Recommendation from a Tri-Decoupled Perspective: Tokenization, Architecture, and Optimization}.
\newblock  (\bibinfo{year}{2025}).
\newblock


\bibitem[Liao et~al\mbox{.}(2024)]%
        {DBLP:conf/sigir/LiaoL0WYW024}
\bibfield{author}{\bibinfo{person}{Jiayi Liao}, \bibinfo{person}{Sihang Li}, \bibinfo{person}{Zhengyi Yang}, \bibinfo{person}{Jiancan Wu}, \bibinfo{person}{Yancheng Yuan}, \bibinfo{person}{Xiang Wang}, {and} \bibinfo{person}{Xiangnan He}.} \bibinfo{year}{2024}\natexlab{}.
\newblock \showarticletitle{LLaRA: Large Language-Recommendation Assistant}. In \bibinfo{booktitle}{\emph{Proceedings of the 47th International {ACM} {SIGIR} Conference on Research and Development in Information Retrieval, {SIGIR} 2024, Washington DC, USA, July 14-18, 2024}}, \bibfield{editor}{\bibinfo{person}{Grace~Hui Yang}, \bibinfo{person}{Hongning Wang}, \bibinfo{person}{Sam Han}, \bibinfo{person}{Claudia Hauff}, \bibinfo{person}{Guido Zuccon}, {and} \bibinfo{person}{Yi~Zhang}} (Eds.). \bibinfo{publisher}{{ACM}}, \bibinfo{pages}{1785--1795}.
\newblock
\href{https://doi.org/10.1145/3626772.3657690}{doi:\nolinkurl{10.1145/3626772.3657690}}


\bibitem[Lin et~al\mbox{.}(2024)]%
        {lin2024dcdr}
\bibfield{author}{\bibinfo{person}{Xiao Lin}, \bibinfo{person}{Xiaokai Chen}, \bibinfo{person}{Chenyang Wang}, \bibinfo{person}{Hantao Shu}, \bibinfo{person}{Linfeng Song}, \bibinfo{person}{Biao Li}, {and} \bibinfo{person}{Peng Jiang}.} \bibinfo{year}{2024}\natexlab{}.
\newblock \showarticletitle{Discrete Conditional Diffusion for Reranking in Recommendation}. In \bibinfo{booktitle}{\emph{Companion Proceedings of the {ACM} on Web Conference 2024, {WWW} 2024, Singapore, Singapore, May 13-17, 2024}}, \bibfield{editor}{\bibinfo{person}{Tat{-}Seng Chua}, \bibinfo{person}{Chong{-}Wah Ngo}, \bibinfo{person}{Roy~Ka{-}Wei Lee}, \bibinfo{person}{Ravi Kumar}, {and} \bibinfo{person}{Hady~W. Lauw}} (Eds.). \bibinfo{publisher}{{ACM}}, \bibinfo{pages}{161--169}.
\newblock
\href{https://doi.org/10.1145/3589335.3648313}{doi:\nolinkurl{10.1145/3589335.3648313}}


\bibitem[Lin et~al\mbox{.}(2025)]%
        {lin2025gref}
\bibfield{author}{\bibinfo{person}{Zhijie Lin}, \bibinfo{person}{Zhuofeng Li}, \bibinfo{person}{Chenglei Dai}, \bibinfo{person}{Wentian Bao}, \bibinfo{person}{Shuai Lin}, \bibinfo{person}{Enyun Yu}, \bibinfo{person}{Haoxiang Zhang}, {and} \bibinfo{person}{Liang Zhao}.} \bibinfo{year}{2025}\natexlab{}.
\newblock \showarticletitle{GReF: {A} Unified Generative Framework for Efficient Reranking via Ordered Multi-token Prediction}. In \bibinfo{booktitle}{\emph{Proceedings of the 34th {ACM} International Conference on Information and Knowledge Management, {CIKM} 2025, Seoul, Republic of Korea, November 10-14, 2025}}, \bibfield{editor}{\bibinfo{person}{Meeyoung Cha}, \bibinfo{person}{Chanyoung Park}, \bibinfo{person}{Noseong Park}, \bibinfo{person}{Carl Yang}, \bibinfo{person}{Senjuti~Basu Roy}, \bibinfo{person}{Jessie Li}, \bibinfo{person}{Jaap Kamps}, \bibinfo{person}{Kijung Shin}, \bibinfo{person}{Bryan Hooi}, {and} \bibinfo{person}{Lifang He}} (Eds.). \bibinfo{publisher}{{ACM}}, \bibinfo{pages}{5879--5887}.
\newblock
\href{https://doi.org/10.1145/3746252.3761540}{doi:\nolinkurl{10.1145/3746252.3761540}}


\bibitem[Liu et~al\mbox{.}(2025b)]%
        {liu2025recflow}
\bibfield{author}{\bibinfo{person}{Qi Liu}, \bibinfo{person}{Kai Zheng}, \bibinfo{person}{Rui Huang}, \bibinfo{person}{Wuchao Li}, \bibinfo{person}{Kuo Cai}, \bibinfo{person}{Yuan Chai}, \bibinfo{person}{Yanan Niu}, \bibinfo{person}{Yiqun Hui}, \bibinfo{person}{Bing Han}, \bibinfo{person}{Na Mou}, \bibinfo{person}{Hongning Wang}, \bibinfo{person}{Wentian Bao}, \bibinfo{person}{Yunen Yu}, \bibinfo{person}{Guorui Zhou}, \bibinfo{person}{Han Li}, \bibinfo{person}{Yang Song}, \bibinfo{person}{Defu Lian}, {and} \bibinfo{person}{Kun Gai}.} \bibinfo{year}{2025}\natexlab{b}.
\newblock \showarticletitle{RecFlow: An Industrial Full Flow Recommendation Dataset}. In \bibinfo{booktitle}{\emph{The Thirteenth International Conference on Learning Representations, {ICLR} 2025, Singapore, April 24-28, 2025}}. \bibinfo{publisher}{OpenReview.net}.
\newblock
\urldef\tempurl%
\url{https://openreview.net/forum?id=vVHc8bGRns}
\showURL{%
\tempurl}


\bibitem[Liu et~al\mbox{.}(2023)]%
        {liu2023gfn4list}
\bibfield{author}{\bibinfo{person}{Shuchang Liu}, \bibinfo{person}{Qingpeng Cai}, \bibinfo{person}{Zhankui He}, \bibinfo{person}{Bowen Sun}, \bibinfo{person}{Julian~J. McAuley}, \bibinfo{person}{Dong Zheng}, \bibinfo{person}{Peng Jiang}, {and} \bibinfo{person}{Kun Gai}.} \bibinfo{year}{2023}\natexlab{}.
\newblock \showarticletitle{Generative Flow Network for Listwise Recommendation}. In \bibinfo{booktitle}{\emph{Proceedings of the 29th {ACM} {SIGKDD} Conference on Knowledge Discovery and Data Mining, {KDD} 2023, Long Beach, CA, USA, August 6-10, 2023}}, \bibfield{editor}{\bibinfo{person}{Ambuj~K. Singh}, \bibinfo{person}{Yizhou Sun}, \bibinfo{person}{Leman Akoglu}, \bibinfo{person}{Dimitrios Gunopulos}, \bibinfo{person}{Xifeng Yan}, \bibinfo{person}{Ravi Kumar}, \bibinfo{person}{Fatma Ozcan}, {and} \bibinfo{person}{Jieping Ye}} (Eds.). \bibinfo{publisher}{{ACM}}, \bibinfo{pages}{1524--1534}.
\newblock
\href{https://doi.org/10.1145/3580305.3599364}{doi:\nolinkurl{10.1145/3580305.3599364}}


\bibitem[Liu et~al\mbox{.}(2022)]%
        {liu2022rerankingsurvey}
\bibfield{author}{\bibinfo{person}{Weiwen Liu}, \bibinfo{person}{Yunjia Xi}, \bibinfo{person}{Jiarui Qin}, \bibinfo{person}{Fei Sun}, \bibinfo{person}{Bo Chen}, \bibinfo{person}{Weinan Zhang}, \bibinfo{person}{Rui Zhang}, {and} \bibinfo{person}{Ruiming Tang}.} \bibinfo{year}{2022}\natexlab{}.
\newblock \showarticletitle{Neural Re-ranking in Multi-stage Recommender Systems: {A} Review}. In \bibinfo{booktitle}{\emph{Proceedings of the Thirty-First International Joint Conference on Artificial Intelligence, {IJCAI} 2022, Vienna, Austria, 23-29 July 2022}}, \bibfield{editor}{\bibinfo{person}{Luc~De Raedt}} (Ed.). \bibinfo{publisher}{ijcai.org}, \bibinfo{pages}{5512--5520}.
\newblock
\href{https://doi.org/10.24963/IJCAI.2022/771}{doi:\nolinkurl{10.24963/IJCAI.2022/771}}


\bibitem[Liu et~al\mbox{.}(2025a)]%
        {liu2025understanding}
\bibfield{author}{\bibinfo{person}{Zichen Liu}, \bibinfo{person}{Changyu Chen}, \bibinfo{person}{Wenjun Li}, \bibinfo{person}{Penghui Qi}, \bibinfo{person}{Tianyu Pang}, \bibinfo{person}{Chao Du}, \bibinfo{person}{Wee~Sun Lee}, {and} \bibinfo{person}{Min Lin}.} \bibinfo{year}{2025}\natexlab{a}.
\newblock \showarticletitle{Understanding r1-zero-like training: A critical perspective}.
\newblock \bibinfo{journal}{\emph{arXiv preprint arXiv:2503.20783}} (\bibinfo{year}{2025}).
\newblock


\bibitem[Loshchilov and Hutter(2019)]%
        {loshchilov2017decoupled}
\bibfield{author}{\bibinfo{person}{Ilya Loshchilov} {and} \bibinfo{person}{Frank Hutter}.} \bibinfo{year}{2019}\natexlab{}.
\newblock \showarticletitle{Decoupled Weight Decay Regularization}. In \bibinfo{booktitle}{\emph{7th International Conference on Learning Representations, {ICLR} 2019, New Orleans, LA, USA, May 6-9, 2019}}. \bibinfo{publisher}{OpenReview.net}.
\newblock
\urldef\tempurl%
\url{https://openreview.net/forum?id=Bkg6RiCqY7}
\showURL{%
\tempurl}


\bibitem[Mao et~al\mbox{.}(2025)]%
        {mao2025denoising}
\bibfield{author}{\bibinfo{person}{Wenyu Mao}, \bibinfo{person}{Shuchang Liu}, \bibinfo{person}{Hailan Yang}, \bibinfo{person}{Xiaobei Wang}, \bibinfo{person}{Xiaoyu Yang}, \bibinfo{person}{Xu Gao}, \bibinfo{person}{Xiang Li}, \bibinfo{person}{Lantao Hu}, \bibinfo{person}{Han Li}, \bibinfo{person}{Kun Gai}, {et~al\mbox{.}}} \bibinfo{year}{2025}\natexlab{}.
\newblock \showarticletitle{Denoising Neural Reranker for Recommender Systems}.
\newblock \bibinfo{journal}{\emph{arXiv preprint arXiv:2509.18736}} (\bibinfo{year}{2025}).
\newblock


\bibitem[McAuley et~al\mbox{.}(2015)]%
        {mcauley2015image}
\bibfield{author}{\bibinfo{person}{Julian~J. McAuley}, \bibinfo{person}{Christopher Targett}, \bibinfo{person}{Qinfeng Shi}, {and} \bibinfo{person}{Anton van~den Hengel}.} \bibinfo{year}{2015}\natexlab{}.
\newblock \showarticletitle{Image-Based Recommendations on Styles and Substitutes}. In \bibinfo{booktitle}{\emph{Proceedings of the 38th International {ACM} {SIGIR} Conference on Research and Development in Information Retrieval, Santiago, Chile, August 9-13, 2015}}, \bibfield{editor}{\bibinfo{person}{Ricardo Baeza{-}Yates}, \bibinfo{person}{Mounia Lalmas}, \bibinfo{person}{Alistair Moffat}, {and} \bibinfo{person}{Berthier~A. Ribeiro{-}Neto}} (Eds.). \bibinfo{publisher}{{ACM}}, \bibinfo{pages}{43--52}.
\newblock
\href{https://doi.org/10.1145/2766462.2767755}{doi:\nolinkurl{10.1145/2766462.2767755}}


\bibitem[Meng et~al\mbox{.}(2025)]%
        {meng2025generative}
\bibfield{author}{\bibinfo{person}{Yue Meng}, \bibinfo{person}{Cheng Guo}, \bibinfo{person}{Yi Cao}, \bibinfo{person}{Tong Liu}, {and} \bibinfo{person}{Bo Zheng}.} \bibinfo{year}{2025}\natexlab{}.
\newblock \showarticletitle{A Generative Re-ranking Model for List-level Multi-objective Optimization at Taobao}. In \bibinfo{booktitle}{\emph{Proceedings of the 48th International {ACM} {SIGIR} Conference on Research and Development in Information Retrieval, {SIGIR} 2025, Padua, Italy, July 13-18, 2025}}, \bibfield{editor}{\bibinfo{person}{Nicola Ferro}, \bibinfo{person}{Maria Maistro}, \bibinfo{person}{Gabriella Pasi}, \bibinfo{person}{Omar Alonso}, \bibinfo{person}{Andrew Trotman}, {and} \bibinfo{person}{Suzan Verberne}} (Eds.). \bibinfo{publisher}{{ACM}}, \bibinfo{pages}{4213--4218}.
\newblock
\href{https://doi.org/10.1145/3726302.3731935}{doi:\nolinkurl{10.1145/3726302.3731935}}


\bibitem[Ni et~al\mbox{.}(2022)]%
        {ni2022sentence}
\bibfield{author}{\bibinfo{person}{Jianmo Ni}, \bibinfo{person}{Gustavo~Hern{\'{a}}ndez {\'{A}}brego}, \bibinfo{person}{Noah Constant}, \bibinfo{person}{Ji Ma}, \bibinfo{person}{Keith~B. Hall}, \bibinfo{person}{Daniel Cer}, {and} \bibinfo{person}{Yinfei Yang}.} \bibinfo{year}{2022}\natexlab{}.
\newblock \showarticletitle{Sentence-T5: Scalable Sentence Encoders from Pre-trained Text-to-Text Models}. In \bibinfo{booktitle}{\emph{Findings of the Association for Computational Linguistics: {ACL} 2022, Dublin, Ireland, May 22-27, 2022}}, \bibfield{editor}{\bibinfo{person}{Smaranda Muresan}, \bibinfo{person}{Preslav Nakov}, {and} \bibinfo{person}{Aline Villavicencio}} (Eds.). \bibinfo{publisher}{Association for Computational Linguistics}, \bibinfo{pages}{1864--1874}.
\newblock
\href{https://doi.org/10.18653/V1/2022.FINDINGS-ACL.146}{doi:\nolinkurl{10.18653/V1/2022.FINDINGS-ACL.146}}


\bibitem[Pei et~al\mbox{.}(2019)]%
        {pei2019prm}
\bibfield{author}{\bibinfo{person}{Changhua Pei}, \bibinfo{person}{Yi Zhang}, \bibinfo{person}{Yongfeng Zhang}, \bibinfo{person}{Fei Sun}, \bibinfo{person}{Xiao Lin}, \bibinfo{person}{Hanxiao Sun}, \bibinfo{person}{Jian Wu}, \bibinfo{person}{Peng Jiang}, \bibinfo{person}{Junfeng Ge}, \bibinfo{person}{Wenwu Ou}, {and} \bibinfo{person}{Dan Pei}.} \bibinfo{year}{2019}\natexlab{}.
\newblock \showarticletitle{Personalized re-ranking for recommendation}. In \bibinfo{booktitle}{\emph{Proceedings of the 13th {ACM} Conference on Recommender Systems, RecSys 2019, Copenhagen, Denmark, September 16-20, 2019}}, \bibfield{editor}{\bibinfo{person}{Toine Bogers}, \bibinfo{person}{Alan Said}, \bibinfo{person}{Peter Brusilovsky}, {and} \bibinfo{person}{Domonkos Tikk}} (Eds.). \bibinfo{publisher}{{ACM}}, \bibinfo{pages}{3--11}.
\newblock
\href{https://doi.org/10.1145/3298689.3347000}{doi:\nolinkurl{10.1145/3298689.3347000}}


\bibitem[Raffel et~al\mbox{.}(2020)]%
        {2020t5}
\bibfield{author}{\bibinfo{person}{Colin Raffel}, \bibinfo{person}{Noam Shazeer}, \bibinfo{person}{Adam Roberts}, \bibinfo{person}{Katherine Lee}, \bibinfo{person}{Sharan Narang}, \bibinfo{person}{Michael Matena}, \bibinfo{person}{Yanqi Zhou}, \bibinfo{person}{Wei Li}, {and} \bibinfo{person}{Peter~J. Liu}.} \bibinfo{year}{2020}\natexlab{}.
\newblock \showarticletitle{Exploring the Limits of Transfer Learning with a Unified Text-to-Text Transformer}.
\newblock \bibinfo{journal}{\emph{J. Mach. Learn. Res.}}  \bibinfo{volume}{21} (\bibinfo{year}{2020}), \bibinfo{pages}{140:1--140:67}.
\newblock
\urldef\tempurl%
\url{https://jmlr.org/papers/v21/20-074.html}
\showURL{%
\tempurl}


\bibitem[Rajput et~al\mbox{.}(2023)]%
        {rajput2023recommender}
\bibfield{author}{\bibinfo{person}{Shashank Rajput}, \bibinfo{person}{Nikhil Mehta}, \bibinfo{person}{Anima Singh}, \bibinfo{person}{Raghunandan~Hulikal Keshavan}, \bibinfo{person}{Trung Vu}, \bibinfo{person}{Lukasz Heldt}, \bibinfo{person}{Lichan Hong}, \bibinfo{person}{Yi Tay}, \bibinfo{person}{Vinh~Q. Tran}, \bibinfo{person}{Jonah Samost}, \bibinfo{person}{Maciej Kula}, \bibinfo{person}{Ed~H. Chi}, {and} \bibinfo{person}{Mahesh Sathiamoorthy}.} \bibinfo{year}{2023}\natexlab{}.
\newblock \showarticletitle{Recommender Systems with Generative Retrieval}. In \bibinfo{booktitle}{\emph{Advances in Neural Information Processing Systems 36: Annual Conference on Neural Information Processing Systems 2023, NeurIPS 2023, New Orleans, LA, USA, December 10 - 16, 2023}}, \bibfield{editor}{\bibinfo{person}{Alice Oh}, \bibinfo{person}{Tristan Naumann}, \bibinfo{person}{Amir Globerson}, \bibinfo{person}{Kate Saenko}, \bibinfo{person}{Moritz Hardt}, {and} \bibinfo{person}{Sergey Levine}} (Eds.).
\newblock
\urldef\tempurl%
\url{http://papers.nips.cc/paper\_files/paper/2023/hash/20dcab0f14046a5c6b02b61da9f13229-Abstract-Conference.html}
\showURL{%
\tempurl}


\bibitem[Ren et~al\mbox{.}(2024)]%
        {ren2024nar4rec}
\bibfield{author}{\bibinfo{person}{Yuxin Ren}, \bibinfo{person}{Qiya Yang}, \bibinfo{person}{Yichun Wu}, \bibinfo{person}{Wei Xu}, \bibinfo{person}{Yalong Wang}, {and} \bibinfo{person}{Zhiqiang Zhang}.} \bibinfo{year}{2024}\natexlab{}.
\newblock \showarticletitle{Non-autoregressive Generative Models for Reranking Recommendation}. In \bibinfo{booktitle}{\emph{Proceedings of the 30th {ACM} {SIGKDD} Conference on Knowledge Discovery and Data Mining, {KDD} 2024, Barcelona, Spain, August 25-29, 2024}}, \bibfield{editor}{\bibinfo{person}{Ricardo Baeza{-}Yates} {and} \bibinfo{person}{Francesco Bonchi}} (Eds.). \bibinfo{publisher}{{ACM}}, \bibinfo{pages}{5625--5634}.
\newblock
\href{https://doi.org/10.1145/3637528.3671645}{doi:\nolinkurl{10.1145/3637528.3671645}}


\bibitem[Shao et~al\mbox{.}(2024)]%
        {shao2024deepseekmath}
\bibfield{author}{\bibinfo{person}{Zhihong Shao}, \bibinfo{person}{Peiyi Wang}, \bibinfo{person}{Qihao Zhu}, \bibinfo{person}{Runxin Xu}, \bibinfo{person}{Junxiao Song}, \bibinfo{person}{Xiao Bi}, \bibinfo{person}{Haowei Zhang}, \bibinfo{person}{Mingchuan Zhang}, \bibinfo{person}{YK Li}, \bibinfo{person}{Yang Wu}, {et~al\mbox{.}}} \bibinfo{year}{2024}\natexlab{}.
\newblock \showarticletitle{Deepseekmath: Pushing the limits of mathematical reasoning in open language models}.
\newblock \bibinfo{journal}{\emph{arXiv preprint arXiv:2402.03300}} (\bibinfo{year}{2024}).
\newblock


\bibitem[Shi et~al\mbox{.}(2023)]%
        {shi2023pier}
\bibfield{author}{\bibinfo{person}{Xiaowen Shi}, \bibinfo{person}{Fan Yang}, \bibinfo{person}{Ze Wang}, \bibinfo{person}{Xiaoxu Wu}, \bibinfo{person}{Muzhi Guan}, \bibinfo{person}{Guogang Liao}, \bibinfo{person}{Yongkang Wang}, \bibinfo{person}{Xingxing Wang}, {and} \bibinfo{person}{Dong Wang}.} \bibinfo{year}{2023}\natexlab{}.
\newblock \showarticletitle{{PIER:} Permutation-Level Interest-Based End-to-End Re-ranking Framework in E-commerce}. In \bibinfo{booktitle}{\emph{Proceedings of the 29th {ACM} {SIGKDD} Conference on Knowledge Discovery and Data Mining, {KDD} 2023, Long Beach, CA, USA, August 6-10, 2023}}, \bibfield{editor}{\bibinfo{person}{Ambuj~K. Singh}, \bibinfo{person}{Yizhou Sun}, \bibinfo{person}{Leman Akoglu}, \bibinfo{person}{Dimitrios Gunopulos}, \bibinfo{person}{Xifeng Yan}, \bibinfo{person}{Ravi Kumar}, \bibinfo{person}{Fatma Ozcan}, {and} \bibinfo{person}{Jieping Ye}} (Eds.). \bibinfo{publisher}{{ACM}}, \bibinfo{pages}{4823--4831}.
\newblock
\href{https://doi.org/10.1145/3580305.3599886}{doi:\nolinkurl{10.1145/3580305.3599886}}


\bibitem[Wang et~al\mbox{.}(2025)]%
        {wang2025nlgr}
\bibfield{author}{\bibinfo{person}{Shuli Wang}, \bibinfo{person}{Xue Wei}, \bibinfo{person}{Senjie Kou}, \bibinfo{person}{Chi Wang}, \bibinfo{person}{Wenshuai Chen}, \bibinfo{person}{Qi Tang}, \bibinfo{person}{Yinhua Zhu}, \bibinfo{person}{Xiong Xiao}, {and} \bibinfo{person}{Xingxing Wang}.} \bibinfo{year}{2025}\natexlab{}.
\newblock \showarticletitle{{NLGR:} Utilizing Neighbor Lists for Generative Rerank in Personalized Recommendation Systems}. In \bibinfo{booktitle}{\emph{Companion Proceedings of the {ACM} on Web Conference 2025, {WWW} 2025, Sydney, NSW, Australia, 28 April 2025 - 2 May 2025}}, \bibfield{editor}{\bibinfo{person}{Guodong Long}, \bibinfo{person}{Michale Blumestein}, \bibinfo{person}{Yi~Chang}, \bibinfo{person}{Liane Lewin{-}Eytan}, \bibinfo{person}{Zi~Helen Huang}, {and} \bibinfo{person}{Elad Yom{-}Tov}} (Eds.). \bibinfo{publisher}{{ACM}}, \bibinfo{pages}{530--537}.
\newblock
\href{https://doi.org/10.1145/3701716.3715251}{doi:\nolinkurl{10.1145/3701716.3715251}}


\bibitem[Wang et~al\mbox{.}(2024)]%
        {wang2024learnable}
\bibfield{author}{\bibinfo{person}{Wenjie Wang}, \bibinfo{person}{Honghui Bao}, \bibinfo{person}{Xinyu Lin}, \bibinfo{person}{Jizhi Zhang}, \bibinfo{person}{Yongqi Li}, \bibinfo{person}{Fuli Feng}, \bibinfo{person}{See{-}Kiong Ng}, {and} \bibinfo{person}{Tat{-}Seng Chua}.} \bibinfo{year}{2024}\natexlab{}.
\newblock \showarticletitle{Learnable Item Tokenization for Generative Recommendation}. In \bibinfo{booktitle}{\emph{Proceedings of the 33rd {ACM} International Conference on Information and Knowledge Management, {CIKM} 2024, Boise, ID, USA, October 21-25, 2024}}, \bibfield{editor}{\bibinfo{person}{Edoardo Serra} {and} \bibinfo{person}{Francesca Spezzano}} (Eds.). \bibinfo{publisher}{{ACM}}, \bibinfo{pages}{2400--2409}.
\newblock
\href{https://doi.org/10.1145/3627673.3679569}{doi:\nolinkurl{10.1145/3627673.3679569}}


\bibitem[Xi et~al\mbox{.}(2022)]%
        {xi2022multi}
\bibfield{author}{\bibinfo{person}{Yunjia Xi}, \bibinfo{person}{Weiwen Liu}, \bibinfo{person}{Jieming Zhu}, \bibinfo{person}{Xilong Zhao}, \bibinfo{person}{Xinyi Dai}, \bibinfo{person}{Ruiming Tang}, \bibinfo{person}{Weinan Zhang}, \bibinfo{person}{Rui Zhang}, {and} \bibinfo{person}{Yong Yu}.} \bibinfo{year}{2022}\natexlab{}.
\newblock \showarticletitle{Multi-Level Interaction Reranking with User Behavior History}. In \bibinfo{booktitle}{\emph{{SIGIR} '22: The 45th International {ACM} {SIGIR} Conference on Research and Development in Information Retrieval, Madrid, Spain, July 11 - 15, 2022}}, \bibfield{editor}{\bibinfo{person}{Enrique Amig{\'{o}}}, \bibinfo{person}{Pablo Castells}, \bibinfo{person}{Julio Gonzalo}, \bibinfo{person}{Ben Carterette}, \bibinfo{person}{J.~Shane Culpepper}, {and} \bibinfo{person}{Gabriella Kazai}} (Eds.). \bibinfo{publisher}{{ACM}}, \bibinfo{pages}{1336--1346}.
\newblock
\href{https://doi.org/10.1145/3477495.3532026}{doi:\nolinkurl{10.1145/3477495.3532026}}


\bibitem[Yang et~al\mbox{.}(2025)]%
        {yang2025mgeclig}
\bibfield{author}{\bibinfo{person}{Hailan Yang}, \bibinfo{person}{Zhenyu Qi}, \bibinfo{person}{Shuchang Liu}, \bibinfo{person}{Xiaoyu Yang}, \bibinfo{person}{Xiaobei Wang}, \bibinfo{person}{Xiang Li}, \bibinfo{person}{Lantao Hu}, \bibinfo{person}{Han Li}, {and} \bibinfo{person}{Kun Gai}.} \bibinfo{year}{2025}\natexlab{}.
\newblock \showarticletitle{Comprehensive List Generation for Multi-Generator Reranking}. In \bibinfo{booktitle}{\emph{Proceedings of the 48th International {ACM} {SIGIR} Conference on Research and Development in Information Retrieval, {SIGIR} 2025, Padua, Italy, July 13-18, 2025}}, \bibfield{editor}{\bibinfo{person}{Nicola Ferro}, \bibinfo{person}{Maria Maistro}, \bibinfo{person}{Gabriella Pasi}, \bibinfo{person}{Omar Alonso}, \bibinfo{person}{Andrew Trotman}, {and} \bibinfo{person}{Suzan Verberne}} (Eds.). \bibinfo{publisher}{{ACM}}, \bibinfo{pages}{2298--2308}.
\newblock
\href{https://doi.org/10.1145/3726302.3729933}{doi:\nolinkurl{10.1145/3726302.3729933}}


\bibitem[Yu et~al\mbox{.}(2025)]%
        {yu2025dapo}
\bibfield{author}{\bibinfo{person}{Qiying Yu}, \bibinfo{person}{Zheng Zhang}, \bibinfo{person}{Ruofei Zhu}, \bibinfo{person}{Yufeng Yuan}, \bibinfo{person}{Xiaochen Zuo}, \bibinfo{person}{Yu Yue}, \bibinfo{person}{Weinan Dai}, \bibinfo{person}{Tiantian Fan}, \bibinfo{person}{Gaohong Liu}, \bibinfo{person}{Lingjun Liu}, {et~al\mbox{.}}} \bibinfo{year}{2025}\natexlab{}.
\newblock \showarticletitle{Dapo: An open-source llm reinforcement learning system at scale}.
\newblock \bibinfo{journal}{\emph{arXiv preprint arXiv:2503.14476}} (\bibinfo{year}{2025}).
\newblock


\bibitem[Zhai et~al\mbox{.}(2024)]%
        {DBLP:conf/icml/ZhaiLLWLCGGGHLS24}
\bibfield{author}{\bibinfo{person}{Jiaqi Zhai}, \bibinfo{person}{Lucy Liao}, \bibinfo{person}{Xing Liu}, \bibinfo{person}{Yueming Wang}, \bibinfo{person}{Rui Li}, \bibinfo{person}{Xuan Cao}, \bibinfo{person}{Leon Gao}, \bibinfo{person}{Zhaojie Gong}, \bibinfo{person}{Fangda Gu}, \bibinfo{person}{Jiayuan He}, \bibinfo{person}{Yinghai Lu}, {and} \bibinfo{person}{Yu Shi}.} \bibinfo{year}{2024}\natexlab{}.
\newblock \showarticletitle{Actions Speak Louder than Words: Trillion-Parameter Sequential Transducers for Generative Recommendations}. In \bibinfo{booktitle}{\emph{Forty-first International Conference on Machine Learning, {ICML} 2024, Vienna, Austria, July 21-27, 2024}}. \bibinfo{publisher}{OpenReview.net}.
\newblock
\urldef\tempurl%
\url{https://openreview.net/forum?id=xye7iNsgXn}
\showURL{%
\tempurl}


\bibitem[Zhang et~al\mbox{.}(2025b)]%
        {zhang2025goalrank}
\bibfield{author}{\bibinfo{person}{Kaike Zhang}, \bibinfo{person}{Xiaobei Wang}, \bibinfo{person}{Shuchang Liu}, \bibinfo{person}{Hailan Yang}, \bibinfo{person}{Xiang Li}, \bibinfo{person}{Lantao Hu}, \bibinfo{person}{Han Li}, \bibinfo{person}{Qi Cao}, \bibinfo{person}{Fei Sun}, {and} \bibinfo{person}{Kun Gai}.} \bibinfo{year}{2025}\natexlab{b}.
\newblock \showarticletitle{GoalRank: Group-Relative Optimization for a Large Ranking Model}.
\newblock \bibinfo{journal}{\emph{CoRR}}  \bibinfo{volume}{abs/2509.22046} (\bibinfo{year}{2025}).
\newblock
\showeprint[arXiv]{2509.22046}
\href{https://doi.org/10.48550/ARXIV.2509.22046}{doi:\nolinkurl{10.48550/ARXIV.2509.22046}}


\bibitem[Zhang et~al\mbox{.}(2025a)]%
        {qwen3embedding}
\bibfield{author}{\bibinfo{person}{Yanzhao Zhang}, \bibinfo{person}{Mingxin Li}, \bibinfo{person}{Dingkun Long}, \bibinfo{person}{Xin Zhang}, \bibinfo{person}{Huan Lin}, \bibinfo{person}{Baosong Yang}, \bibinfo{person}{Pengjun Xie}, \bibinfo{person}{An Yang}, \bibinfo{person}{Dayiheng Liu}, \bibinfo{person}{Junyang Lin}, \bibinfo{person}{Fei Huang}, {and} \bibinfo{person}{Jingren Zhou}.} \bibinfo{year}{2025}\natexlab{a}.
\newblock \showarticletitle{Qwen3 Embedding: Advancing Text Embedding and Reranking Through Foundation Models}.
\newblock \bibinfo{journal}{\emph{arXiv preprint arXiv:2506.05176}} (\bibinfo{year}{2025}).
\newblock


\bibitem[Zhao et~al\mbox{.}(2023)]%
        {zhao2023kuaisim}
\bibfield{author}{\bibinfo{person}{Kesen Zhao}, \bibinfo{person}{Shuchang Liu}, \bibinfo{person}{Qingpeng Cai}, \bibinfo{person}{Xiangyu Zhao}, \bibinfo{person}{Ziru Liu}, \bibinfo{person}{Dong Zheng}, \bibinfo{person}{Peng Jiang}, {and} \bibinfo{person}{Kun Gai}.} \bibinfo{year}{2023}\natexlab{}.
\newblock \showarticletitle{KuaiSim: {A} Comprehensive Simulator for Recommender Systems}. In \bibinfo{booktitle}{\emph{Advances in Neural Information Processing Systems 36: Annual Conference on Neural Information Processing Systems 2023, NeurIPS 2023, New Orleans, LA, USA, December 10 - 16, 2023}}, \bibfield{editor}{\bibinfo{person}{Alice Oh}, \bibinfo{person}{Tristan Naumann}, \bibinfo{person}{Amir Globerson}, \bibinfo{person}{Kate Saenko}, \bibinfo{person}{Moritz Hardt}, {and} \bibinfo{person}{Sergey Levine}} (Eds.).
\newblock
\urldef\tempurl%
\url{http://papers.nips.cc/paper\_files/paper/2023/hash/8c7f8f98f9a8f5650922dd4545254f28-Abstract-Datasets\_and\_Benchmarks.html}
\showURL{%
\tempurl}


\bibitem[Zheng et~al\mbox{.}(2024)]%
        {DBLP:conf/icde/ZhengHLCZCW24}
\bibfield{author}{\bibinfo{person}{Bowen Zheng}, \bibinfo{person}{Yupeng Hou}, \bibinfo{person}{Hongyu Lu}, \bibinfo{person}{Yu Chen}, \bibinfo{person}{Wayne~Xin Zhao}, \bibinfo{person}{Ming Chen}, {and} \bibinfo{person}{Ji{-}Rong Wen}.} \bibinfo{year}{2024}\natexlab{}.
\newblock \showarticletitle{Adapting Large Language Models by Integrating Collaborative Semantics for Recommendation}. In \bibinfo{booktitle}{\emph{40th {IEEE} International Conference on Data Engineering, {ICDE} 2024, Utrecht, The Netherlands, May 13-16, 2024}}. \bibinfo{publisher}{{IEEE}}, \bibinfo{pages}{1435--1448}.
\newblock
\href{https://doi.org/10.1109/ICDE60146.2024.00118}{doi:\nolinkurl{10.1109/ICDE60146.2024.00118}}


\bibitem[Zhou et~al\mbox{.}(2025a)]%
        {zhou2025onerec}
\bibfield{author}{\bibinfo{person}{Guorui Zhou}, \bibinfo{person}{Jiaxin Deng}, \bibinfo{person}{Jinghao Zhang}, \bibinfo{person}{Kuo Cai}, \bibinfo{person}{Lejian Ren}, \bibinfo{person}{Qiang Luo}, \bibinfo{person}{Qianqian Wang}, \bibinfo{person}{Qigen Hu}, \bibinfo{person}{Rui Huang}, \bibinfo{person}{Shiyao Wang}, {et~al\mbox{.}}} \bibinfo{year}{2025}\natexlab{a}.
\newblock \showarticletitle{OneRec Technical Report}.
\newblock \bibinfo{journal}{\emph{arXiv preprint arXiv:2506.13695}} (\bibinfo{year}{2025}).
\newblock


\bibitem[Zhou et~al\mbox{.}(2025b)]%
        {zhou2025onerecv2}
\bibfield{author}{\bibinfo{person}{Guorui Zhou}, \bibinfo{person}{Hengrui Hu}, \bibinfo{person}{Hongtao Cheng}, \bibinfo{person}{Huanjie Wang}, \bibinfo{person}{Jiaxin Deng}, \bibinfo{person}{Jinghao Zhang}, \bibinfo{person}{Kuo Cai}, \bibinfo{person}{Lejian Ren}, \bibinfo{person}{Lu Ren}, \bibinfo{person}{Liao Yu}, \bibinfo{person}{Pengfei Zheng}, \bibinfo{person}{Qiang Luo}, \bibinfo{person}{Qianqian Wang}, \bibinfo{person}{Qigen Hu}, \bibinfo{person}{Rui Huang}, \bibinfo{person}{Ruiming Tang}, \bibinfo{person}{Shiyao Wang}, \bibinfo{person}{Shujie Yang}, \bibinfo{person}{Tao Wu}, \bibinfo{person}{Wuchao Li}, \bibinfo{person}{Xinchen Luo}, \bibinfo{person}{Xingmei Wang}, \bibinfo{person}{Yi Su}, \bibinfo{person}{Yunfan Wu}, \bibinfo{person}{Zexuan Cheng}, \bibinfo{person}{Zhanyu Liu}, \bibinfo{person}{Zixing Zhang}, \bibinfo{person}{Bin Zhang}, \bibinfo{person}{Boxuan Wang}, \bibinfo{person}{Chaoyi Ma}, \bibinfo{person}{Chengru Song}, \bibinfo{person}{Chenhui Wang}, \bibinfo{person}{Chenglong Chu},
  \bibinfo{person}{Di Wang}, \bibinfo{person}{Dongxue Meng}, \bibinfo{person}{Dunju Zang}, \bibinfo{person}{Fan Yang}, \bibinfo{person}{Fangyu Zhang}, \bibinfo{person}{Feng Jiang}, \bibinfo{person}{Fuxing Zhang}, \bibinfo{person}{Gang Wang}, \bibinfo{person}{Guowang Zhang}, \bibinfo{person}{Han Li}, \bibinfo{person}{Honghui Bao}, \bibinfo{person}{Hongyang Cao}, \bibinfo{person}{Jiaming Huang}, \bibinfo{person}{Jiapeng Chen}, \bibinfo{person}{Jiaqiang Liu}, \bibinfo{person}{Jinghui Jia}, \bibinfo{person}{Kun Gai}, \bibinfo{person}{Lantao Hu}, \bibinfo{person}{Liang Zeng}, \bibinfo{person}{Qiang Wang}, \bibinfo{person}{Qidong Zhou}, \bibinfo{person}{Rongzhou Zhang}, \bibinfo{person}{Shengzhe Wang}, \bibinfo{person}{Shihui He}, \bibinfo{person}{Shuang Yang}, \bibinfo{person}{Siyang Mao}, \bibinfo{person}{Sui Huang}, \bibinfo{person}{Tiantian He}, \bibinfo{person}{Tingting Gao}, \bibinfo{person}{Wei Yuan}, \bibinfo{person}{Xiao Liang}, \bibinfo{person}{Xiaoxiao Xu}, \bibinfo{person}{Xugang Liu},
  \bibinfo{person}{Yan Wang}, \bibinfo{person}{Yang Zhou}, \bibinfo{person}{Yi Wang}, \bibinfo{person}{Yiwu Liu}, \bibinfo{person}{Yue Song}, \bibinfo{person}{Yufei Zhang}, \bibinfo{person}{Yunfeng Zhao}, \bibinfo{person}{Zhixin Ling}, {and} \bibinfo{person}{Ziming Li}.} \bibinfo{year}{2025}\natexlab{b}.
\newblock \showarticletitle{OneRec-V2 Technical Report}.
\newblock \bibinfo{journal}{\emph{CoRR}}  \bibinfo{volume}{abs/2508.20900} (\bibinfo{year}{2025}).
\newblock
\showeprint[arXiv]{2508.20900}
\href{https://doi.org/10.48550/ARXIV.2508.20900}{doi:\nolinkurl{10.48550/ARXIV.2508.20900}}


\end{thebibliography}

\end{document}